\algnewcommand{\IIf}[1]{\State\algorithmicif\ #1\ \algorithmicthen}
\algnewcommand{\EndIIf}{\unskip\ \algorithmicend\ \algorithmicif}
\setlist{nolistsep,leftmargin=*}
\DeclareMathOperator*{\argmax}{argmax}
\newtheorem{definition}{\textbf{Definition}}
\newtheorem{proposition}{\textbf{Proposition}}
\newtheorem{problem}{\textbf{Problem}}
\newcommand{\AnjaliComment}[1]{\textcolor{brown}{[[#1]]}}
\def\sealalgo{\textsc{SeAl}\xspace}
\def\greedyNashalgo{\textsc{GreedyNash}\xspace}
\def\greedyRevalgo{\textsc{GreedyRevenue}\xspace}
\def\fralgo{\textsc{FairRec}\xspace}
\def\FOEIR{\textsc{FOEIR}\xspace}
\def\UnconsGreedyNash{\textsc{UnconsGreedyNash}\xspace}
\def\LPT{\textsc{LPT}\xspace}
\def\TFUalgo{\textsc{TFrom}\xspace}
\def\userConst{\textsc{Re-seller-Constraint}\xspace}
\def\productConst{\textsc{Product-Constraint}\xspace}
\def\NSWCardConst{\textsc{2S-CardNashOpt}\xspace}
\def\NSWOpt{\textsc{NashOptimization}\xspace}
\begin{document}

\setlength{\abovedisplayskip}{3pt}
\setlength{\belowdisplayskip}{3pt}
\setlength{\textfloatsep}{2pt}
\title{Towards Fair Allocation in Social Commerce Platforms}

\author{Anjali Gupta}
\affiliation{
  \institution{Indian Institute of Technology Delhi}
  \country{India}
}

\author{Shreyans J. Nagori}
\affiliation{
  \institution{Indian Institute of Technology Delhi}
  \country{India}
}

\author{Abhijnan Chakraborty}
\affiliation{
  \institution{Indian Institute of Technology Delhi}
  \country{India}
}

\author{Rohit Vaish}
\affiliation{
  \institution{Indian Institute of Technology Delhi}
  \country{India}
}

\author{Sayan Ranu}
\affiliation{
  \institution{Indian Institute of Technology Delhi}
  \country{India}
}

\author{Prajit Prashant Nadkarni}
\affiliation{%
  \institution{Flipkart Internet Pvt Ltd}
  \country{India}
}

\author{Narendra Varma Dasararaju}
\affiliation{%
  \institution{Flipkart Internet Pvt Ltd}
  \country{India}
}
\author{Muthusamy Chelliah}
\affiliation{%
  \institution{Flipkart Internet Pvt Ltd}
  \country{India}
} 

\renewcommand{\shortauthors}{Gupta et al.}

\begin{abstract}
Social commerce platforms are emerging businesses where producers sell products through \emph{re-sellers} who advertise the products to other customers in their social network. Due to the increasing popularity of this business model, thousands of small producers and re-sellers are starting to depend on these platforms for their livelihood; thus, it is important to provide {\it fair earning opportunities} to them. The enormous product space in such platforms prohibits manual search, and motivates the need for recommendation algorithms to effectively allocate product exposure and,  consequently, earning opportunities. In this work, we focus on the fairness of such allocations in social commerce platforms and formulate the problem of assigning products to re-sellers as a fair division problem with indivisible items under \emph{two-sided cardinality constraints}, wherein each product must be given to at least a certain number of re-sellers and each re-seller must get a certain number of products.
\looseness=-1

Our work systematically explores various well-studied benchmarks of fairness---including Nash social welfare, envy-freeness up to one item ($EF1$), and equitability up to one item ($EQ1$)---from both theoretical and experimental perspectives. We find that the existential and computational guarantees of these concepts known from the unconstrained setting do not extend to our constrained model. To address this limitation, we develop a mixed-integer linear program and other scalable heuristics that provide near-optimal approximation of Nash social welfare in simulated and real social commerce datasets. Overall, our work takes the first step towards achieving provable fairness alongside reasonable revenue guarantees on social commerce platforms.
\end{abstract}

\begin{CCSXML}
<ccs2012>
   <concept>
       <concept_id>10002951.10003317.10003347.10003350</concept_id>
       <concept_desc>Information systems~Recommender systems</concept_desc>
       <concept_significance>500</concept_significance>
       </concept>
   <concept>
       <concept_id>10003752.10010070.10010099.10010101</concept_id>
       <concept_desc>Theory of computation~Algorithmic mechanism design</concept_desc>
       <concept_significance>500</concept_significance>
       </concept>
   <concept>
       <concept_id>10002951.10003260.10003282.10003550</concept_id>
       <concept_desc>Information systems~Electronic commerce</concept_desc>
       <concept_significance>500</concept_significance>
       </concept>
 </ccs2012>
\end{CCSXML}

\ccsdesc[500]{Information systems~Recommender systems}
\ccsdesc[500]{Theory of computation~Algorithmic mechanism design}
\ccsdesc[500]{Information systems~Electronic commerce}

\keywords{Social Commerce, Fair Division, Two-sided Constrained Allocation, Nash social welfare, Envy-Freeness, EF1, Equitibility, EQ1}

\maketitle

\vspace{-2mm}
\section{Introduction}
Social commerce platforms are online platforms where social networks between users enable commerce. 
These platforms involve two primary stakeholders: {\it producers}, who sell their products on the  platforms, and \textit{re-sellers}, who are the users who help the producers reach out to the customers. In contrast to traditional e-commerce platforms where customers \emph{directly} interact with the platform to purchase products, in social commerce the interaction is \emph{indirect} and is facilitated by re-sellers  
who curate and promote products to other users/customers (see Figure~\ref{fig:SC}). 
Re-sellers leverage their social networks (e.g., WhatsApp, Facebook, Instagram, as well as offline connections) to amplify the potential for sale~\cite{lin2019cross}.  
Typically, the explicit social network being leveraged by the reseller is invisible to the social commerce platform. 

Several companies around the world have successfully adopted social commerce as their business model. 
Prominent examples include Pinduoduo and Taobao in China, Meesho and Shopsy in India, Cafepress and Lockerz in USA, and Shopify in Canada. According to a recent study, social commerce industry is projected to reach USD 7.03 trillion globally by 2030~\cite{strategicmarketresearch}. 
The impressive growth of social commerce stems from its ability to tap into a customer base that e-commerce has been unable to reach. Specifically, a large number of customers are 
still more comfortable 
with physical shopping than electronic marketplaces. This preference is due to 
various factors including 
a lack of digital skills, 
online payment issues, lack of trust over after-sales services, etc.~\cite{forbe}. 
Social commerce 
combines the benefits of e-commerce and physical stores: interacting with the re-sellers provides the customers with the same trust factor (the ``human touch'') and comfort of interacting with physical stores (say, due to the ease of communication in the local dialect~\cite{mint}) while extending the benefits of lower price and a larger item inventory commonly associated with electronic marketplaces.
%
%

By incorporating re-sellers in the business chain, social commerce provides livelihood to millions of people. For example, in India alone, one of the social commerce platforms 
 engages $13$ million re-sellers~\cite{techcruch}. In addition, social commerce, much like e-commerce platforms, provides revenue opportunities to a large number of sellers/producers who sell products on these platforms. Since the number of products in the inventory is often huge, social commerce platforms assist re-sellers by recommending products that are more likely to be purchased during their social sharing~\cite{xu2019relation, gao2020item}. 
 The recommendation engine, therefore, plays a crucial role in determining the revenues earned by both producers and re-sellers. Consequently, it is imperative that the recommendations 
are made in a {\it fair and equitable} manner to ensure a holistic long-term engagement of both parties with the platform. 

While the fairness of recommendation in e-commerce platforms has 
been studied recently~\cite{Multi-stake, fairRec,FairMultiSide,multiFR}, 
such questions
remain 
unexplored 
in the context of social commerce. Furthermore, owing to the nature of the business model and resulting constraints in social commerce platforms, fairness notions from e-commerce do not trivially extend to social commerce. This gives rise to various challenges, as listed below:
\begin{itemize}
\item  \textit{Lack of customer visibility:} Many social commerce platforms do not have access to the social graphs among customers and re-sellers. In such cases, they rely entirely on the re-sellers' ability to promote and sell a product. This scenario is in sharp contrast to e-commerce platforms where the customers are directly visible and the recommendations can therefore be personalized. Thus, the social commerce recommendation, in effect, (virtually) allocates products to the re-sellers for selling. Our work makes the first attempt towards exploring fairness in such setting. 
\item \textit{Re-seller constraints:} Each re-seller may have {\it expertise} only on a subset of products available in the inventory.
Additionally, the number of allocations to each re-seller needs to be capped to ensure the design constraints of the UI, which allows showing only a certain number of products to each re-seller.
\looseness=-1
\item  \textit{Product constraints:} Similarly, there are constraints on the product side as well. The platforms may have tie-ups with different producers demanding that every product get some minimum exposure – meaning that every product must be recommended to at least a few re-sellers.
\end{itemize}
\begin{figure}[t]
\centering  
\subfigure
{\label{fig:Ecommerce} \includegraphics[width=0.18\textwidth]{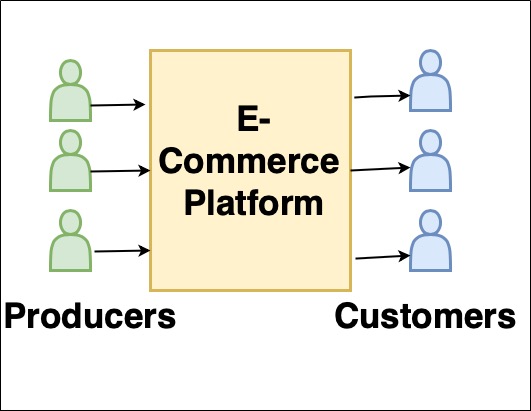}\label{fig:Ecommerce1}}
\hfill
\subfigure
{\label{fig:socialCcommerce} \includegraphics[width=0.22\textwidth]{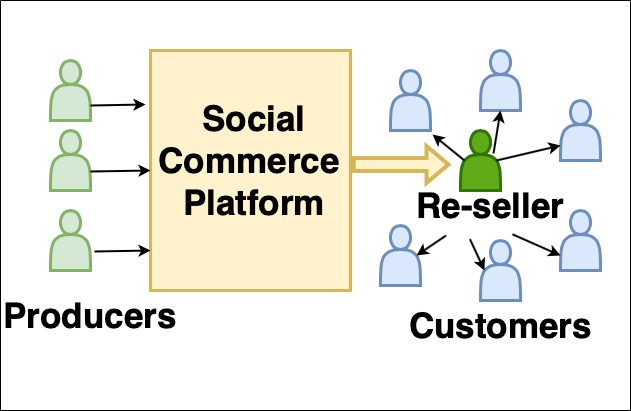}\label{fig:socialCommerce1}}
\caption{Producers connect with the customers \emph{directly} in e-commerce platforms (left), whereas in \emph{social commerce}, re-sellers facilitate this connection (right).
}
\label{fig:SC}
\Description[e-commerce platforms and social commerce]{Producers connect with the customers directly in e-commerce platforms (left), whereas in social commerce, re-sellers facilitate this connection (right).}
\end{figure} 
\noindent
Owing to the challenges outlined above, in this work, we seek to develop a {\it fair product allocation} algorithm for re-sellers which at the same time maximizes \emph{cumulative revenue} of the re-sellers (as well as the social commerce platform) while providing some \emph{minimum exposure guarantee} to each product on the platform. 
Our main contributions are summarized as follows:
\begin{itemize}
\item {\bf Novel problem:} We formulate the problem of assigning products to re-sellers in social commerce platforms as a fair division problem with indivisible items under \emph{two-sided cardinality constraints}. The product-side cardinality constraint ensures that each product must be allocated to at least a certain number of re-sellers. The re-seller cardinality constraint, on the other hand, enforces that each re-seller must get a certain number of products~(\S~\ref{sec:formulation}).
\looseness=-1

\item \textbf{Theoretical analysis:} We explore various well-studied benchmarks of fairness---including \textit{Nash social welfare}, \textit{Envy-Freeness up to one item} (\textit{EF1)}, and \textit{Equitability up to one item} (\textit{EQ1})---in the context of two-sided cardinality constraints. We show that the positive existential and computational guarantees of these concepts in the unconstrained setting (i.e., 
without cardinality constraints) do not extend to our constrained setting~(\S~\ref{FairDivCard}).
\looseness=-1

\item \textbf{Algorithm design:} We develop a mixed-integer linear program (MILP) for the NP-hard problem of optimizing Nash social welfare in our setting. 
While the MILP is reasonable for small-sized simulated data, it turns out to be prohibitively slow for real-world datasets. We overcome this bottleneck by designing considerably faster heuristics that provide near-optimal efficacy~(\S~\ref{sec:algo}).

\item \textbf{Empirical evaluation:} We perform extensive experiments on a large real-world social commerce dataset from Shopsy (owned by Flipkart), one of the largest social commerce platforms in India, to demonstrate the effectiveness and scalability of the proposed methodologies (\S~\ref{sec:experiments}). 

\end{itemize}

\section{Product Allocation in Social Commerce}
\label{sec:formulation}
In this section, we formulate the problem of product allocation in social commerce. First, we define the problem from a purely revenue maximization perspective. Then, we introduce the constraints from both producer and re-seller sides. In the next section, we modify the optimization objective to ensure fair revenue distribution.

\subsection{Revenue Maximization Perspective} 
We begin with the assumption that there exists $m$ re-sellers ($U=\{u_1, u_2,\cdots,u_m\}$) and $n$ products ($P=\{p_1, p_2,\cdots,p_n\}$) in the social commerce platform.
\begin{definition}[Expertise Matrix~($E$)] \textit{
We assume that each re-seller has expertise in a specific set of products which we model via an expertise matrix $E\in\mathbb{R}^{m\times n}$ where $E_{i,j}\in [0,1]$ represents the probability with which re-seller $u_i$ sells a product $p_j$ (after $p_j$ is allocated to $u_i$).}
\end{definition}
In this work, we treat the construction of the expertise matrix as a black box. The typical approach to construct this matrix is through collaborative filtering ~\cite{cf1,cf2} or neural models~\cite{fairCF1, fairCF2, FairCF3}. 
 \looseness=-1

\begin{definition}[Utility~($W$)] \textit{  We denote the expected revenue generated by re-seller $u_i$ for product $p_j$ as Utility $W_{i,j}$.}
\begin{align}
W_{i,j} = E_{i,j} \cdot  \emph{rev}(j)
\end{align}
Here $\emph{rev}(j)$ is the revenue (or profit) generated from selling $p_j$.
\end{definition}
Given a binary re-seller-product \textit{allocation matrix} $A\in [0/1]^{n\times m}$, the expected revenue to the social platform, under the assumption of $W_{i,j}$ independence for all $u_i$ and $p_j$, is:
\begin{align}
\label{eq:rev}
E_{rev}(A) = \sum_{u_i \in U} \sum_{p_j \in P} W_{i,j}\times A_{i,j}
\end{align}
Hereon, we use the notation $A_i= \{\forall p_j\in P,\:A_{i,j}=1\}$ to denote the set of products allocated to re-seller $u_i$. Furthermore, we use $p_j\in A_i$ to denote $A_{i,j}=1$.

A purely revenue maximization objective would therefore reduce to finding the allocation matrix $A_{rev}^*$ that maximizes $E_{rev}$.
\begin{equation}
\arg\max_{A_{rev}^*\in\mathbb{A}} E_{rev}(A_{rev}^*)
\end{equation}
Here, $\mathbb{A}$ denotes the universe of all possible allocation matrices.

\subsection{Two-Sided Constraints}
While maximizing revenue is a reasonable goal, it does not consider some practical business requirements. For example, maximizing revenue may only allocate the most profitable products to the most successful re-sellers. As a result, other products may never get allocated, violating agreements a platform may have with the producers. Similarly, these platforms work within the design constraints of a UI, which typically shows a certain number of products to each re-seller. Without it, some re-sellers may get overwhelmed with product recommendations, while others may fail to get a reasonable number of products, hampering the re-seller experience. 
We introduce the following constraints to avoid these scenarios. 
\begin{enumerate}
\item \textbf{\userConst}: Each re-seller gets allocated at least $L_1$ and at most $L_2$ unique products.
    \begin{align}
     L_1\leq|A_i|\leq L_2, \forall u_i\in U \label{eq:user_constrain}
    \end{align}
\item  \textbf{\productConst}: Each product is allocated to at least $R_1$ and at most $R2$ number of re-sellers.
\begin{align}
        R_1\leq\sum_{i \in \{1,2,...,m\}} \mathbbm{1}\{ p_j \in A_i \} \leq R_2,  \forall p_j \in P \label{eq:product_constrain}
\end{align}
\end{enumerate}
Next, we explore how we can achieve fair revenue distribution under these two constraints.
\section{Fair Allocation under Cardinality Constraints}
\label{FairDivCard}
Prior research in other contexts has shown that focusing only on maximizing revenue can lead to unfairness~\cite{suhr2019two,NEURIPS2019_3070e6ad, FairMultiSide, fairRec, Chakraborty2017FairSF}. Similarly, in social commerce, an algorithm maximizing revenue under the two-sided constraints may allocate all high revenue products to a few re-sellers, while others get only the low revenue ones. If most of the re-sellers do not find enough opportunities to earn well, they may leave a particular platform and go to its competitors. Thus, ensuring that resellers earn equitably can be a differentiating factor to attract more resellers into a platform and retain them in the long run. In this work, we aim to provide a {\it fair product allocation} scheme to re-sellers such that each re-seller is treated fairly in terms of earning opportunity and revenue generation. 

To determine a suitable notion of fairness, we resort to the fair division literature~\cite{FairDivSurvey} and consider two well-studied notions---envy-freeness and equitability---which are based on ``intrapersonal" and ``interpersonal" comparisons, respectively. For indivisible resources (as in our context), exact versions of these notions can not be guaranteed. Hence, one has to define approximations, and two popular approximations are envy-freeness up to one item ($EF1$), and equitability up to one item ($EQ1$). Next, we define $EQ1$ and $EF1$, and prove their non-existence in the presence of \userConst (Eq.\ref{eq:user_constrain}) and \productConst (Eq.\ref{eq:product_constrain}). Then, we discuss another well-studied measure of fairness---Nash social welfare---and establish its suitability in our context where we have to satisfy the two-sided cardinality constraints.

\subsection{Equitability up to One Item (\boldmath$EQ1$)}
\label{subsec:EQ1}
\begin{definition}[Equitability up to one item~($EQ1$)]\textit{An allocation $A = (A_1,\dots,A_n)$ is said to satisfy equitability up to one item ($EQ1$) if for every pair of
re-sellers $u_i, u_k \in U$ such that $A_k\ne\emptyset$, there exists some item $p_j \in A_k$ such that $U_i(A_i) \geq U_k(A_k \setminus {p_j})$. 
} 
\end{definition}
Intuitively, an $EQ1$ allocation ensures that the maximum revenue disparity between any pair of re-sellers is bounded by the revenue of one allotted item. For $EQ1$, positive existence results are known in the literature in the unconstrained setting.
\begin{proposition}
$EQ1$ allocation always exists if an allocation is performed 
without any cardinality constraints~\cite{EQ1,MatroidEQProof}.
\end{proposition}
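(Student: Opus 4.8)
The plan is to establish existence constructively via a greedy ``balancing'' algorithm and then argue that its output is $EQ1$. The algorithm maintains a partial allocation, initialized to the empty allocation, and processes the $n$ products one at a time: at each step it identifies a re-seller $u_k$ whose current utility $U_k(A_k)$ is minimum among all re-sellers (breaking ties arbitrarily) and hands an arbitrary still-unallocated product to $u_k$. Since there are no cardinality constraints, this procedure is always feasible and terminates once every product has been assigned, producing a complete allocation $A = (A_1,\dots,A_m)$. Note that the specific item handed out is irrelevant to the argument; all that matters is \emph{who} receives it.

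The first ingredient I would isolate is a monotonicity observation: because utilities are additive and every entry $W_{i,j}\ge 0$ (utilities are non-negative, as $W_{i,j}=E_{i,j}\cdot\mathrm{rev}(j)$), each re-seller's utility can only increase as the algorithm proceeds. Concretely, if $B \subseteq B'$ then $U_i(B)\le U_i(B')$. Thus the ``snapshot'' utility of any re-seller at any intermediate step is a lower bound on that same re-seller's final utility.

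The heart of the argument is a per-pair witness. Fix any ordered pair $(u_i,u_k)$ with $A_k\neq\emptyset$. Let $g$ be the last product the algorithm assigned to $u_k$, and consider the moment just before this assignment. By the selection rule, $u_k$ had minimum utility at that instant, so $U_k(A_k^{\mathrm{pre}})\le U_i(A_i^{\mathrm{pre}})$, where the superscript denotes the snapshot bundles. Since $g$ is the last item $u_k$ receives, $A_k^{\mathrm{pre}}=A_k\setminus\{g\}$, and by the monotonicity observation $U_i(A_i^{\mathrm{pre}})\le U_i(A_i)$. Chaining these gives
\[
U_i(A_i)\;\ge\;U_i(A_i^{\mathrm{pre}})\;\ge\;U_k(A_k^{\mathrm{pre}})\;=\;U_k(A_k\setminus\{g\}),
\]
which is exactly the $EQ1$ condition for $(u_i,u_k)$ with removed item $p_j=g$. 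As the pair was arbitrary, $A$ is $EQ1$, proving existence. The requirement $A_k\neq\emptyset$ in the definition is precisely what makes ``the last item $g$'' well-defined, so no separate case analysis is needed.

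I would flag the main subtlety as the interaction between the minimum-utility snapshot and the final utilities: the inequality $U_k(A_k^{\mathrm{pre}})\le U_i(A_i^{\mathrm{pre}})$ is only useful once combined with monotonicity of $U_i$. This monotonicity is exactly what can break once cardinality constraints are imposed---a re-seller may be capped and forced to stop receiving items, or the minimum-utility re-seller may be unable to accept any feasible product---which foreshadows why the existence guarantee does not survive in the constrained model discussed next.
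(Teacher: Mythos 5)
Your proof is correct and is essentially the standard argument from the works the paper cites for this proposition; the paper itself gives no proof, only the citation. The greedy ``give the next item to the currently poorest re-seller'' construction, combined with the last-item-received witness and monotonicity of additive nonnegative utilities, is exactly how $EQ1$ existence is established in the unconstrained setting, and your closing remark correctly identifies the step (feasibility/monotonicity of the greedy choice) that fails once two-sided cardinality constraints are imposed.
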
 
However, in the presence of the two-sided constraints, 
an $EQ1$ allocation may not exist.
\begin{restatable}{theorem}{theoremFirst}
\label{cor:eq1fail}
$EQ1$ allocation may not exist in the presence of \userConst (Eq.\ref{eq:user_constrain}) and \productConst (Eq.\ref{eq:product_constrain}).
\end{restatable}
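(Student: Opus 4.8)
The plan is to establish non-existence by exhibiting a single fixed instance — a concrete number of re-sellers $m$ and products $n$, a concrete utility matrix $W$, and concrete values of $L_1, L_2, R_1, R_2$ — in which \emph{every} allocation satisfying both \userConst (Eq.~\ref{eq:user_constrain}) and \productConst (Eq.~\ref{eq:product_constrain}) violates $EQ1$. Since $EQ1$ is a property quantified over all feasible allocations, producing one such counterexample is enough.

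First I would pin the cardinality parameters so tightly that the feasible region becomes structurally uniform, rather than trying to control arbitrary bundles. A convenient choice is to force each re-seller to receive exactly $L_1 = L_2 = c$ products and each product to be handed to exactly $R_1 = R_2 = d$ re-sellers, with the counting identity $mc = nd$ guaranteeing feasibility of a biregular bipartite allocation. The payoff of this rigidity is that I never have to reason about \emph{which} products a re-seller gets, only \emph{how many}: every feasible bundle has cardinality exactly $c$.

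Second I would craft $W$ to contain one ``uniformly high'' re-seller $u_k$ with $W_{k,j} = H$ for all $j$, and at least one ``uniformly low'' re-seller $u_i$ with $W_{i,j} = \ell$ for all $j$, where $H \gg \ell$. Because each of these re-sellers values all products identically, their realized utilities become independent of the allocation: $U_k(A_k) = cH$ and $U_i(A_i) = c\ell$ for every feasible $A$, and $u_k$'s most valuable single item is always worth exactly $H$. The $EQ1$ requirement for the ordered pair $(u_i, u_k)$ then reduces to the \emph{allocation-independent} inequality $c\ell \ge U_k(A_k) - \max_{p_j \in A_k} W_{k,j} = cH - H = (c-1)H$. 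Choosing parameters that violate it, for instance $c = d = 2$ (so $m = n$), $\ell = 1$, and $H = 3$, yields $2 \not\ge 3$, so $EQ1$ fails uniformly over the entire feasible region.

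The remaining step is to confirm the instance is non-vacuous by checking that a feasible allocation exists, which is immediate from $mc = nd$ and the realizability of regular bipartite degree sequences. I expect the main obstacle to be conceptual rather than computational: the delicate point is guaranteeing the $EQ1$ violation is \emph{robust} — that no rearrangement of products anywhere in the feasible region can equalize utilities to within a single item. The uniform-valuation construction, coupled with exact-cardinality constraints, is precisely the device that secures this, since it freezes both $U_i(A_i)$ and the post-removal quantity $U_k(A_k) - \max_{p_j \in A_k} W_{k,j}$ to constants across all feasible allocations.
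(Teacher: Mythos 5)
Your proposal is correct, and while it follows the same overall strategy as the paper (exhibit one concrete instance in which every feasible allocation violates $EQ1$), the counterexample works by a genuinely different mechanism. The paper takes $5$ re-sellers and $5$ products with $L_1=L_2=R_1=R_2=3$, where $u_1,\dots,u_4$ value $p_1,\dots,p_4$ at $\alpha/4$ each and $p_5$ at $10-\alpha$, while $u_5$ values every product at $2$; the violation there hinges on a pigeonhole step — since $p_5$ is given to exactly three re-sellers, at least one of $u_1,\dots,u_4$ must miss it and is left with utility $3\alpha/4$, which falls below $U_5(A_5\setminus\{p\})=4$ whenever $\alpha<16/3$ — so the product-side cap on copies of $p_5$ is doing real work. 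Your construction instead makes the two distinguished re-sellers value \emph{all} products uniformly, so the exact re-seller-side cardinality alone freezes their utilities at $c\ell$ and $cH$ across the entire feasible region, and the $EQ1$ test collapses to the allocation-independent inequality $c\ell\ge(c-1)H$, which your choice $c=2$, $\ell=1$, $H=3$ falsifies with no case analysis; the product-side constraint enters only to certify feasibility via $mc=nd$ and the realizability of a biregular bipartite assignment. Your argument is the more elementary of the two and, as a bonus, isolates the re-seller-side lower bound as already sufficient to destroy $EQ1$; the paper's example, by contrast, illustrates how scarcity of a single high-value item under the product-side constraint can be the culprit. Both establish the stated theorem.
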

\begin{proof}
Provided in Appendix~\ref{sec:Non_Existence_Proof}.
\end{proof}
\begin{restatable}{theorem}{theoremSecond}
\label{thm:eq1npcomplete}
Determining the existence of an $EQ1$ allocation in the presence of \userConst (Eq.\ref{eq:user_constrain}) and \productConst (Eq.\ref{eq:product_constrain}) is NP-complete.
\end{restatable}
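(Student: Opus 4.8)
The plan is to prove both membership in NP and NP-hardness. Membership is straightforward: given a candidate allocation matrix $A$, one can verify in time polynomial in $m$ and $n$ that it satisfies \userConst and \productConst, and that it is $EQ1$. For the latter, observe that the $EQ1$ condition for an ordered pair $(u_i,u_k)$ with $A_k\neq\emptyset$ is equivalent to $U_i(A_i) \ge U_k(A_k) - \max_{p_j \in A_k} W_{k,j}$; computing every $U_i(A_i)$ together with each per-bundle maximum and then checking all $O(m^2)$ ordered pairs is clearly polynomial. Hence the problem lies in NP, and the remaining work is to exhibit a polynomial-time reduction from a known NP-complete problem.

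For hardness, I would reduce from a balanced number-partitioning problem, the natural candidate being \textsc{Partition} (or its strongly NP-complete refinement \textsc{3-Partition} if strong hardness is desired). Given integers $a_1,\dots,a_t$, the idea is to create one product per integer together with a set of re-sellers playing the role of ``bins,'' using (near-)identical valuations so that each re-seller's utility equals the total value of the integers it receives. The parameters $L_1,L_2$ (bundle sizes) and $R_1,R_2$ (per-product replication) are then chosen to carve out exactly the feasible partitions of the source instance: fixing $L_1=L_2$ and $R_1=R_2$ forces a biregular incidence structure, so that an $EQ1$ allocation exists precisely when the integers can be split into groups whose utilities are balanced. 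A YES-instance yields perfectly balanced bundles and hence a trivially $EQ1$ allocation, while the goal is to argue that any NO-instance forces some pair of re-sellers to be imbalanced by strictly more than one item, destroying $EQ1$.

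The main obstacle, and the part requiring the most care, is neutralizing the ``up to one item'' slack in $EQ1$: because the test subtracts the largest item of the richer bundle, a merely approximate balance could spuriously satisfy $EQ1$ even for a NO-instance. I would address this with a value-gap construction, scaling and offsetting the item values by a common dominant term so that the attainable utility gaps between bundles are quantized, being either $0$ (in the balanced/YES case) or at least a threshold that provably exceeds the largest single item a richer bundle can contain; the residual encoding of the $a_j$ then separates YES from NO. Auxiliary ``anchor'' products carrying re-seller-specific large valuations can be used to deactivate the $EQ1$ constraints involving any padding re-sellers, since a bundle containing such an anchor satisfies every $EQ1$ comparison vacuously, which helps enforce the cardinality constraints without interfering with the balance gadget. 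A convenient building block is the non-existence gadget underlying Theorem~\ref{cor:eq1fail}, copies of which can be switched on or off by the partition so that feasibility of the full instance mirrors the existence of the desired partition. Combining membership with this reduction establishes NP-completeness.
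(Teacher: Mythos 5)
Your NP-membership argument is fine, and you correctly identify both the right family of source problems (the paper reduces from 3-Partition) and the central difficulty, namely that the ``up to one item'' slack can let a nearly-balanced NO-instance slip through. But the hardness half remains a plan rather than a proof: the reduction is never pinned down (valuations, parameters, number of re-sellers), and the reverse direction---showing that a NO-instance admits \emph{no} $EQ1$ allocation---is exactly the part you defer. Worse, the specific device you propose for closing that gap backfires. If you offset every item's value by a common dominant term $M$, then for equal-size bundles the $EQ1$ test for the pair $(u_i,u_k)$ becomes $sM+r_i \geq (s-1)M + (r_k - a_j)$, i.e.\ $M + r_i + a_j \geq r_k$: removing one item from the richer bundle also removes one copy of $M$, so a large offset makes $EQ1$ \emph{trivially satisfiable} rather than harder to satisfy. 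Pure rescaling is likewise useless since $EQ1$ is scale-invariant. The anchor-product idea is also only half right: a re-seller holding a huge self-specific anchor is never envious in the $EQ1$ sense, but other re-sellers comparing against it must still reach the value of its non-anchor items after the anchor is deleted, so such gadgets are not ``vacuously'' neutral.

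The missing idea, which is how the paper closes the slack, is to use the dummy agent as an \emph{active lower-bound enforcer} rather than something to be deactivated. With $L_1=L_2=3$ and $R_1=R_2=1$, a dummy re-seller who values every item at $T/2$ always ends up with utility $3T/2$, and deleting any one of its items leaves exactly $T$; hence $EQ1$ forces every main re-seller's utility to be at least $T$. Since the $3m$ partition items have total value $mT$ and each $b_i$ is close to $T/3$ (so no two items reach $T$, forcing each main re-seller to take three partition items and pushing the dummy items onto the dummy re-seller), conservation of total value pins every main re-seller to exactly $T$, recovering a 3-Partition solution. This simultaneously quantizes the gap to zero and handles the one-item slack without any value-gap engineering. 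Your proposal, as written, does not reach a working substitute for this step.
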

\begin{proof}
Provided in Appendix~\ref{sec:Hardness_Proof}.
\end{proof}
\subsection{Envy-Freeness Up to One item (\boldmath$EF1$)}
\label{subsec:EF1}
 \begin{definition}[Envy-freeness up to one item~($EF1$)]\textit{An allocation $A = (A_1,\dots,A_n)$ is said to satisfy envy-freeness up to one item ($EF1$) if for every pair of re-sellers $u_i, u_k \in U$ such that $A_k\ne\emptyset$, there is some item $p_j \in A_k$ such that $U_i(A_i) \geq U_i(A_k \setminus {p_j})$. Here, $U_i(A_i)= \sum_{p_j \in A_i} W_{i,j}$, where $W_{i,j}$ is the utility of re-seller $u_i$ for the item $p_j$.}
\end{definition}
 $EF1$ ensures that no re-seller envies another re-seller for more than one item. Specifically, if the items allotted to re-seller $u_k$ ($A_k$) are instead allotted to $u_i$, then the maximum increase in the income of $u_i$ is bounded by the revenue of one item in $A_k$. 
\noindent
\looseness=-1
In the unconstrained setting, arbitrarily selecting any permutation of re-sellers and then allocating  products greedily in round robin fashion gives $EF1$ solution~\cite{NSW}. We next show that in the presence of \userConst (Eq.\ref{eq:user_constrain}) and \productConst (Eq.\ref{eq:product_constrain}), the exact permutation of re-sellers matters.
\begin{restatable}{proposition}{propositionThree}
\label{cor:ef1perm}
Some permutation of re-sellers in round robin allocation may not find feasible $EF1$ solution in the presence of \userConst (Eq.\ref{eq:user_constrain}) and \productConst (Eq.\ref{eq:product_constrain}).
\end{restatable}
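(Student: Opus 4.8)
The plan is to prove the statement by exhibiting a single explicit instance together with two permutations of the re-sellers: a \emph{good} permutation under which round-robin returns a feasible $EF1$ allocation, and a \emph{bad} permutation under which round-robin fails to return any feasible allocation (and hence, a fortiori, no feasible $EF1$ one). Since the claim is existential---it asserts only that \emph{some} permutation fails---one bad permutation suffices; I would include the good permutation as well, both to make the dependence on the ordering explicit and to certify that the instance does admit a feasible $EF1$ solution, so that the failure is an artifact of the ordering rather than of the instance itself.

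The construction I would use keeps the instance deliberately small (a handful of re-sellers and products) and couples the two-sided cardinality bounds of \userConst and \productConst with re-seller \emph{expertise}: each $u_i$ may only be allocated products in its expertise set (those $p_j$ with $E_{i,j}>0$). I would choose these sets to be \emph{nested}, so that the cardinality constraints admit an essentially unique feasible allocation of a Hall-tight, matching-like character---one scarce product is allocatable to exactly one re-seller, the next to exactly two, and so on. Under the good permutation I would process re-sellers from the most restricted to the least restricted; the greedy pick of each re-seller then coincides with the unique feasible matching, which satisfies \userConst and \productConst and is trivially $EF1$ (indeed, with singleton bundles the $EF1$ condition $U_i(A_i)\ge U_i(A_k\setminus\{p_j\})$ holds because the right-hand side is the value of the empty set). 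Under the bad permutation I would reverse this order: the least-restricted re-sellers, picking greedily, consume exactly the scarce products that the most-restricted re-sellers require, saturating them up to $R_2$; by the time a restricted re-seller is reached, every product in its expertise set is exhausted, so it cannot be given even $L_1$ products and the run cannot be completed into a feasible allocation.

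The main obstacle, and the part I would treat most carefully, is pinning down the semantics of constrained round-robin so that the bad permutation \emph{provably} strands a re-seller: I must fix the tie-breaking rule and specify exactly when a product is ``available'' (not yet at its $R_2$ cap, not already held by the picker, and within the picker's expertise), and then verify by a short Hall-type counting argument that the greedy choices of the early, unrestricted re-sellers necessarily exhaust the scarce products before the restricted re-seller's turn. A secondary point is to confirm that a feasible $EF1$ allocation genuinely exists for the instance (witnessed by the good permutation), so that the proposition isolates the effect of the ordering. I would also note a second, value-based failure mode that the same template supports: by enlarging the cardinality bounds so that bundles have size greater than one, a bad permutation can force one re-seller into a low-value bundle while another accumulates two or more highly valued items, making the envy uncorrectable by the removal of a single item. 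Either failure mode---infeasibility, or a feasible but non-$EF1$ output---establishes the claim.
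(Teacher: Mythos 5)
Your overall strategy---exhibit one explicit instance and one bad ordering under which constrained round robin strands a re-seller before it reaches its quota---is exactly the paper's route, so the proposal is sound in outline. The paper's instance (Table~\ref{tab:EF1DependUsersPerm}) uses $5$ re-sellers, $5$ products, $L_1=L_2=R_1=R_2=3$, four re-sellers with identical valuations and one contrarian ($u_5$) with reversed preferences; under any permutation not starting with $u_5$, the contrarian grabs its two favourite products early, the other four saturate the remaining products to the $R_2$ cap, and $u_5$ is eventually left needing a third \emph{distinct} product when the only products with residual capacity are the two it already holds. Your construction replaces this ``already-held'' device with nested expertise sets and a Hall-type counting argument; both mechanisms manufacture the same kind of dead end, and your extra step of certifying, via a good permutation, that the instance does admit a feasible $EF1$ allocation is a worthwhile strengthening that the paper leaves implicit.

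One caveat. Your primary failure mode assumes a hard feasibility constraint that $u_i$ may only receive products with $E_{i,j}>0$. That constraint is not part of the paper's formal model: \userConst and \productConst are purely cardinality constraints, and expertise enters only through the utility $W_{i,j}$, which may be zero but does not forbid an allocation. Under the model as stated, your ``stranded'' re-seller would simply be handed zero-utility products and the run would complete feasibly; you would then have to fall back on your secondary, value-based failure mode (the stranded re-seller ends with utility $0$ while some other bundle contains at least two items it values positively, so no single removal repairs the envy), or else reproduce the paper's trick of making every product with residual capacity one the stranded re-seller already holds. Either repair is routine, but as written the infeasibility claim rests on a constraint the formal problem does not impose, and the tie-breaking semantics you rightly identify as the delicate point still need to be fixed before the counting argument is airtight.
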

\begin{proof}
Provided in Appendix~\ref{sec:Non_Existence_Proof}.
\end{proof}

There are pseudopolynomial-time algorithms known for finding an $EF1$ and \textit{Pareto optimal}\footnote{An allocation is Pareto optimal if no other allocation can make some re-seller strictly better off without making some other re-seller strictly worse off.} allocation without cardinality constraints~\cite{BKV} or cardinality constraints at product side~\cite{productCardNSW}. However, their adaptation to our two-sided cardinality-constrained version is not straightforward. \citet{twoSidedFairDivision} studied $EF1$ from both  sides of matching parties (double envy-freeness up to one match) and concluded that two-sided $EF1$ i.e., double $EF1$, does not always exist. ~\citet{fairRec} used round robin allocations to find $EF1$ solution, but without strictly ensuring that product side constraints are satisfied.~\citet{EFTwoSided} considered two-sided $EF1$ in a dynamic setting, but only for symmetric binary valuations. In our case, however, the expected revenue can not be mapped to binary valuations. 

\subsection{Nash social welfare} 
\label{sec:nash}
In the unconstrained setting, the problem of maximizing the Nash social welfare is defined as follows.
\looseness=-1
\begin{problem}[\NSWOpt{}]
\label{prb:NSW}
Given a set of $n$ products $P = \{p_1,\dots,p_n\}$ and $m$ re-sellers $U = \{u_1,\dots,u_m\}$ with utilities \\ $\{W_{i,j}\}_{i \in [n],j \in [m]}$, the goal in \NSWOpt{} is to return an allocation $A^* = (A^*_1,\dots,A^*_m)$ that maximizes the geometric mean (equivalently, the product) of agents' utilities. That is,
\begin{align}
\label{eq:NSW}
A^* \in \argmax_{A = (A_1,\dots,A_n)} \prod_{i \in [m]} \sum_{p_j \in A_i} W_{i,j}
\end{align}
 $ where~A_i\cap A_k=\emptyset~\forall~i,k \in [m]$ and $A_1 \cup \dots \cup A_m = P$.
\end{problem}
Replacing revenue with Nash social welfare in the objective function hits a sweet spot between Bentham’s utilitarian notion of social welfare (maximize the sum of utilities) and the egalitarian notion of Rawls (maximize the minimum utility) in unconstrained cardinality setting\cite{NSW}.

While $EQ1$ and $EF1$ are constraint satisfaction based objectives specifying whether an allocation is permitted or not, \NSWOpt is an optimization problem. Historically, \NSWOpt has mostly been studied in a setting where only one copy of each product is available for allocation. \citet{productCardNSW} proposed a polynomial-time algorithm to approximate the optimal Nash social welfare (\NSWOpt) up to a factor of ${e}^{1/e}$ considering the product side cardinality; however, there is no other constraints (on the  re-seller side cardinality). We focus on optimizing \NSWOpt in two-sided cardinality setting in the presence of \userConst (Eq.\ref{eq:user_constrain}) and \productConst (Eq.\ref{eq:product_constrain}). 

\begin{problem}
\label{prb:NSW_card}
\textbf{Two-sided cardinality constrained Nash Optimization} (\textbf{\NSWCardConst}): Given the set of $m$ re-sellers ($U = \{u_1,\cdots,u_{m}\}$), $n$ products ($P=\{p_1, \cdots,p_{n}\}$), and the parameters $L_1,\;L_2,\;R_1,\;R2$ of the \userConst and \productConst, 
identify the allocation matrix $\mathcal{A}$ maximizing:
\begin{align}
\mathcal{A}=\arg\max_{A} \prod_{u_i \in U} \sum_{p_j \in A_i} W_{i,j} \label{eq:opt_nash2}
\end{align}
subject to satisfying \userConst and \productConst.
\end{problem}

This introduces non-trivial challenges and highlights why obtaining fairness in social commerce is a challenging problem. To rigorously establish this, we next discuss the known results for \NSWOpt in the literature, their validity under two-sided constraints, and why \NSWOpt is the objective function of choice for the proposed problem.

It is known that allocations maximizing Nash social welfare simultaneously satisfy $EF1$ and Pareto optimality---a well-studied criterion of economic efficiency in an unconstrained setting~\cite{NSW}. However, the same result may not hold in the presence of \userConst (Eq.\ref{eq:user_constrain}) and \productConst(Eq.\ref{eq:product_constrain}).
%
%
\begin{restatable}{theorem}{theoremThree}
Optimising Nash social welfare (Eq.~\ref{eq:opt_nash2} ) may not give EF1 solution in the presence of \userConst (Eq.\ref{eq:user_constrain}) and \productConst (Eq.\ref{eq:product_constrain}).
\end{restatable}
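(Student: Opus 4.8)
The plan is to prove the statement by constructing an explicit counterexample: a small instance of \NSWCardConst{} whose Nash-optimal feasible allocation demonstrably violates $EF1$. The classical guarantee (Nash-optimal $\Rightarrow$ $EF1$) in the unconstrained setting relies crucially on the items being \emph{partitioned} among the re-sellers, which enables a pairwise exchange argument. The strategy is therefore to exploit the extra freedom that \productConst{} introduces---namely, that choosing $R_1 \geq 2$ forces each product to be allocated to several re-sellers simultaneously, so the same high-value product can (and, under the constraints, must) appear in multiple bundles. This is precisely what breaks the exchange argument underlying the classical proof, and I expect it to drive the $EF1$ failure.

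Concretely, I would first fix a utility matrix $W$ over a small number of products and re-sellers together with parameters $L_1, L_2, R_1, R_2$ chosen so that (i) \userConst{} (Eq.~\ref{eq:user_constrain}) pins each re-seller's bundle size to a narrow range, and (ii) \productConst{} (Eq.~\ref{eq:product_constrain}) forces the ``premium'' products to be spread across many re-sellers while the ``cheap'' products must still be handed out. The aim of this calibration is to make the feasible region force some re-seller $u_i$ to hold a low own-value bundle $A_i$ while another re-seller $u_k$ holds a bundle $A_k$ that $u_i$ values highly, i.e. $U_i(A_k) \gg U_i(A_i)$.

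Next I would identify the Nash-optimal feasible allocation $\mathcal{A}$, i.e. the feasible $A$ maximizing the product in Eq.~\ref{eq:opt_nash2}. With the utilities and cardinality bounds chosen symmetrically across the ``bulk'' re-sellers, I would argue that $\prod_i U_i(A_i)$ is maximized by an allocation that treats the symmetric re-sellers identically and assigns the remaining (asymmetric) items in the essentially unique way permitted by the bounds. Having pinned down $\mathcal{A}$, I would then exhibit a pair $u_i, u_k$ and verify the $EF1$ violation directly: show that for \emph{every} single item $p_j \in A_k$ one still has $U_i(A_i) < U_i(A_k \setminus \{p_j\})$, so that $u_i$'s envy toward $u_k$ cannot be removed by dropping any one item.

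The main obstacle I anticipate is the third step---certifying that the exhibited allocation is genuinely Nash-optimal over the \emph{entire} feasible set, rather than merely a locally good allocation. Because the feasible set is finite, this can in principle be settled by exhaustive enumeration, but a clean by-hand proof needs structural leverage: I would use the symmetry among the bulk re-sellers to collapse the search to a handful of allocation profiles and compare their log-welfare values $\sum_i \log U_i(A_i)$ pairwise. The delicate balancing act is to choose the numbers so that the symmetric profile is the \emph{strict} maximizer while its induced bundles still leave more-than-one-item envy; once the optimizer is fixed, the verification of the $EF1$ inequalities reduces to routine arithmetic.
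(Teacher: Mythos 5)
Your overall strategy---exhibit a concrete instance whose constrained Nash-optimal allocation violates $EF1$---is exactly what the paper does, but your proposal stops at a plan and never produces the instance, and the mechanism you single out as the driver of the failure is not the one that actually does the work. You attribute the breakdown of the classical exchange argument to \productConst{} with $R_1\geq 2$, i.e.\ to premium products being forced into several bundles at once. Two problems here. First, duplicating a high-value product into many bundles tends to \emph{reduce} envy (an envious re-seller typically also holds a copy, and deleting that copy from the envied bundle is a legal $EF1$ deletion), so calibrating an instance along these lines is delicate; and the step you yourself flag as the obstacle---certifying global Nash-optimality over the whole feasible set---is left unexecuted. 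That is the substantive gap: without explicit numbers and an optimality certificate there is no proof. Second, no duplication is needed at all: the paper's counterexample keeps $R_1=R_2=1$ (so the allocation is an ordinary partition) and instead exploits the \emph{upper bound} in \userConst{} (Eq.~\ref{eq:user_constrain}), which caps bundle sizes.

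Concretely, the paper takes two re-sellers and four products with $L_1=L_2=2$ and $R_1=R_2=1$, where $u_1$ values $p_1,p_2,p_3,p_4$ at $1,\,1,\,2+\epsilon,\,2+\epsilon$ and $u_2$ values them at $\epsilon,\,\epsilon,\,3,\,3$ for small $\epsilon>0$. The feasible allocations are the $\binom{4}{2}=6$ balanced partitions, and a direct comparison shows that $A_1=\{p_1,p_2\}$, $A_2=\{p_3,p_4\}$ is the Nash maximizer (welfare $2\cdot 6=12$, versus $(3+\epsilon)^2<12$ or $O(\epsilon)$ for the alternatives). Yet $U_1(A_1)=2$ while $U_1(A_2\setminus\{p_j\})=2+\epsilon$ for either choice of $p_j$, so $u_1$'s envy survives the removal of any single item. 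The cap $L_2=2$ is what forbids handing $u_1$ a third item, which is the move the unconstrained guarantee implicitly relies on; and because the feasible set has only six elements, global optimality is certified by enumeration, sidestepping the symmetry-based argument you were planning. If you want to salvage your write-up, supply an instance of this flavor and do the six-way comparison explicitly.
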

\begin{proof}
Provided in Appendix~\ref{sec:Non_Existence_Proof}.
\end{proof}
Next, we prove that two-sided cardinality constrained \NSWOpt is NP-hard, which makes the proposed problem NP-hard.
\vspace{-2mm}
\begin{restatable}{theorem}{theoremFour}
Two-sided cardinality constrained Nash Optimization (\textbf{\NSWCardConst}) problem is NP-hard.
\end{restatable}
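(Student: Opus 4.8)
The plan is to prove NP-hardness of \NSWCardConst{} by a polynomial-time reduction from a known NP-hard problem. The natural candidate is a partitioning or scheduling problem, since the objective (maximizing the product of re-seller utilities) is essentially a load-balancing objective: the geometric mean is maximized when utilities are spread as evenly as possible across re-sellers. This strongly suggests reducing from \textsc{Exact Cover by 3-Sets} (\textsc{X3C}), \textsc{3-Partition}, or the basic \textsc{Partition}/\textsc{Subset-Sum} family. I would aim for \textsc{X3C} or \textsc{3-Partition}, because these give clean per-re-seller cardinality targets that dovetail with the \userConst{} bounds $L_1 \le |A_i| \le L_2$, and because their numerical structure forces an exactly balanced allocation precisely when a yes-instance exists.

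First I would fix the decision version of the problem: given the instance together with a threshold $\tau$, decide whether there is a feasible allocation (satisfying \userConst{} and \productConst{}) with $\prod_i U_i(A_i) \ge \tau$. I would then describe the gadget: create one re-seller per set-element-bin and one product per ground element, set the utilities $W_{i,j}$ so that each re-seller's feasible bundles correspond exactly to the sets it may ``cover,'' and choose $L_1, L_2, R_1, R_2$ to encode the combinatorial constraint (e.g., $L_1 = L_2 = 3$ to force each re-seller to take exactly one triple, and $R_1 = R_2 = 1$ to force each product to be allocated exactly once). The crux of the encoding is calibrating the utilities and the threshold $\tau$ so that the product of utilities attains $\tau$ if and only if every re-seller receives a bundle of the target total value, which happens exactly when the underlying cover/partition exists.

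The two directions I would then verify are routine once the gadget is fixed. For the forward direction, a yes-instance of the source problem yields a feasible balanced allocation whose utility product meets $\tau$ by the arithmetic–geometric mean inequality (equality in AM--GM holds iff all utilities are equal). For the reverse direction, I would use the strict concavity of $\log$: any feasible allocation meeting the threshold must have all re-seller utilities equal, which forces the bundles to correspond to a valid exact cover or partition. Membership in NP is immediate, since a claimed allocation can be checked against both cardinality constraints and the threshold in polynomial time.

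The main obstacle I anticipate is the interaction between the two-sided constraints and the numerical calibration of $\tau$. Because the objective is a product rather than a sum, the threshold must be set so that the unique optimum of $\prod_i U_i$ over the feasible polytope of load vectors is achieved only at the perfectly balanced point; getting the utilities to be integers (or polynomially bounded rationals) while preserving this sharp AM--GM equality condition is the delicate part. A secondary subtlety is ensuring that the \productConst{} bounds $R_1, R_2$ do not accidentally admit ``spurious'' feasible allocations that beat $\tau$ through an unintended bundle assignment; I would handle this by making infeasible (non-set-aligned) bundles strictly suboptimal, e.g.\ by assigning them negligible or zero marginal utility so that any deviation from a set-aligned allocation strictly lowers the product.
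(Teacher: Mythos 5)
Your proposal takes essentially the same route as the paper: the paper reduces from a balanced version of Partition ($2r$ numbers summing to $2T$, two re-sellers, $L_1=L_2=r$, $R_1=R_2=1$, Nash threshold $T$) and closes the reverse direction exactly as you propose, via the AM--GM equality condition forcing all re-seller utilities to equal $T$. Your instantiation via 3-Partition with $L_1=L_2=3$ is an equally valid member of the same family, and your anticipated difficulties dissolve once you commit to a partition-style source problem with identical additive valuations, since then only bundle sums (not bundle identities) matter.
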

\begin{proof}
Provided in Appendix~\ref{sec:Hardness_Proof}.
\looseness=-1
\end{proof}

\subsection{Suitability of Optimizing Nash social welfare for Social Commerce} 
$EQ1$ allocation under two-sided cardinality constraints may not exist (Theorem~\ref{cor:eq1fail}). Furthermore, determining the existence itself is NP-complete (Theorem.~\ref{thm:eq1npcomplete}). Similarly, double-sided $EF1$ may not exist either~\cite{twoSidedFairDivision}. The feasibility of $EF1$ under two-sided cardinality constraints depends upon the selection of re-seller permutation in round-robin allocation (proposition~\ref{cor:ef1perm}). In contrast, although \NSWCardConst is NP-hard, 
we can attempt to design heuristics and identify allocations with high 
Nash social welfare. This motivates us to choose Nash social welfare as the fairness measure of choice. 

Note that by maximizing the (mathematical) product of the individual revenues obtained by each re-seller, we value allocation schemes higher where none of the individual re-seller revenues is low. This ensures re-seller-side fairness. Fairness on the exposure of products is obtained through \productConst.
As we show later, the proposed heuristic for \NSWCardConst provides near-optimal performance empirically. In the next section, we design MILP and our heuristics for \NSWCardConst. 
\looseness=-1

\section{Proposed Algorithms}
\label{sec:algo}
Next, we describe the proposed algorithms to optimize Nash social welfare under two-sided cardinality constraints. Our contributions towards this goal include a mixed integer linear program~(\S~\ref{sec:ILP_Nash}) which is an adaptation of an existing MILP from the unconstrained setting~~\cite{NSW}, and two iterative greedy heuristics~(\S~\ref{subsec:Greedy}).
%
\subsection{Mixed Integer Linear Programming (MILP)}
\label{sec:ILP_Nash}
Since maximizing Nash social welfare in the unconstrained fair division problem is APX-hard~\cite{L17apx}, this motivates the development of integer linear programs and heuristics 
to work on small instances. To this end,~\citet{NSW} proposed a mixed integer linear program (MILP) that was based on the idea of lower bounding the log function by a piecewise linear function.\footnote{Notice that maximizing the Nash welfare product is equivalent to maximizing the sum of log of the utilities.} The latter was designed to be exactly equal to the log function on integral points, and thus a piecewise linear approximation suffices. We adapt this MILP to our setup by incorporating two-sided cardinality constraints and call the adapted program \textsc{NashMax}.

\looseness=-1

\subsubsection{\textsc{NashMax}}
\label{sec:NashMax}
We scale all $W_{i,j}$ (utility) between $1$ to $1000$ integers and let those scaled integer values are $v_{i,j}, ~\forall i$ and $j$ such that $\sum_{j}v_{i,j} \leq 1000, \forall i$. Let $\gamma_{i}$ be a continuous variable denoting the log of the utility to player $i$, and bound it using a set of linear constraints such that the tightest bound at every integral point $k$ is exactly $\log(k)$.
\begin{align}
\text{Maximize } \sum_{i \in U}{\gamma_{i}}
\end{align}
Subject to
\begin{enumerate}
 \item Log approximation: The RHS in the constraint below is a lower bound on the log function everywhere and is tight on all points where the utility is integral.
    \begin{align*}
    \forall i,~  \gamma_{i} \leq \log k +\left[\log (k+1) - \log k\right] \times \left[\sum_{j \in P}{x_{i,j}\times v_{i,j} }-k\right], \nonumber \\ \forall i \in U, k \in \{1,3,\ldots,999\} 
    \end{align*}
\item Checking whether item $j$ is assigned to re-seller $i$
    \begin{align*}
    x_{i,j} \in \{0,1\}
    \end{align*}
\item Each item is allocated to at least $R_1$ re-sellers and at most $R_2$
    \begin{align*}
    \forall j,~ R_1 \leq \sum_{i}{x_{i,j}}\leq R_2
    \end{align*}
\item Each re-seller is allocated at least $L_1$ and at most $L_2$ items
    \begin{align*}
    \forall i,~ L_1 \leq \sum_{j}{x_{i,j}} \leq L_2
    \end{align*}
\end{enumerate}

\begin{proposition}
The allocation returned by \textsc{NashMax} has the highest Nash social welfare among all cardinality-constrained allocations for the given instance.
\label{prop:MILP}
\end{proposition}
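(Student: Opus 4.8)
The plan is to prove correctness of the MILP in two independent steps: a feasibility step showing that the binary assignments allowed by the program are exactly the cardinality-constrained allocations, and an objective step showing that the continuous variables $\gamma_i$ correctly encode $\log$ of each re-seller's utility. Granting both, \textsc{NashMax} maximizes $\sum_{i} \gamma_i = \sum_i \log u_i = \log \prod_i u_i$ over precisely the set of allocations obeying \userConst and \productConst, where $u_i := \sum_{j} x_{i,j}\, v_{i,j}$ is the integer utility of $u_i$; since $\log$ is increasing, this is the same as maximizing the product $\prod_i u_i$, i.e.\ the Nash social welfare. The feasibility step is immediate: the binary constraint makes each $x_{i,j}$ the indicator of ``$p_j$ allocated to $u_i$'', the constraint $R_1 \le \sum_i x_{i,j} \le R_2$ is verbatim \productConst (Eq.\ref{eq:product_constrain}), and $L_1 \le \sum_j x_{i,j} \le L_2$ is verbatim \userConst (Eq.\ref{eq:user_constrain}); the Log-approximation constraints only upper-bound the $\gamma_i$ and place no restriction on the allocation.

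The crux is the objective step, i.e.\ the correctness of the piecewise-linear encoding of $\log$. Fix a feasible allocation, so each $u_i$ is a fixed integer; integrality holds because every $v_{i,j}$ is a positive integer, and positivity (hence $u_i \ge 1$, so $\log u_i$ is finite) follows from $L_1 \ge 1$ together with $v_{i,j} \ge 1$. Because the objective is increasing in each $\gamma_i$ and the $\gamma_i$ decouple across re-sellers, the optimal value of $\gamma_i$ is the pointwise minimum over $k$ of the right-hand sides $R_k(u) := \log k + [\log(k+1)-\log k]\,(u-k)$, each of which is the secant line of $\log$ through $(k,\log k)$ and $(k+1,\log(k+1))$. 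I would invoke concavity of $\log$ in the standard form: the difference $g(u) := \log u - R_k(u)$ is concave and vanishes at $u=k$ and $u=k+1$, hence $g \ge 0$ on $[k,k+1]$ and $g \le 0$ outside, so $R_k(u) \ge \log u$ for every $u \notin (k,k+1)$, with equality exactly at the endpoints.

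It then remains to check that $\min_{k} R_k(u_i) = \log u_i$ for the chosen index set $k \in \{1,3,\dots,999\}$ at every integer $u_i$. The key observation is that the odd values of $k$ make the intervals $[1,2],[3,4],\dots,[999,1000]$ cover all of $\{1,\dots,1000\}$ as endpoints, so every integer utility is an endpoint of exactly one secant; for that $k$ we get $R_k(u_i) = \log u_i$, giving $\min_k R_k(u_i) \le \log u_i$. For the reverse inequality, no integer lies in the open interior $(k,k+1)$ of any consecutive-integer interval, so the concavity bound of the previous paragraph yields $R_k(u_i) \ge \log u_i$ for every $k$ in the index set; hence $\min_k R_k(u_i) = \log u_i$. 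Consequently the optimal $\gamma_i$ equals $\log u_i$ exactly, the program's objective equals $\log \prod_i u_i$ on each feasible allocation, and an optimal solution of \textsc{NashMax} is an allocation of maximum Nash social welfare.

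I expect the main obstacle to be presenting the tightness argument cleanly rather than any deep difficulty: specifically, justifying that restricting to odd $k$ still pins down $\log$ at \emph{every} integer utility, and handling the two boundary facts (that an integer $u_i$ never lands strictly inside a secant interval, and that $u_i \ge 1$ keeps $\log u_i$ finite). A secondary caveat I would state explicitly is that the Nash social welfare here is measured with respect to the scaled integer utilities $v_{i,j}$; since the scaling multiplies row $i$ by a re-seller-specific positive constant $c_i$, we have $\prod_i(\sum_{j \in A_i} v_{i,j}) = (\prod_i c_i)\,\prod_i(\sum_{j \in A_i} W_{i,j})$, and the factor $\prod_i c_i$ is independent of the allocation, so the argmax is preserved up to the rounding inherent in integral scaling.
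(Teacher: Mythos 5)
Your proof is correct and is exactly the piecewise-linear-lower-bound-of-$\log$ argument from \cite{NSW} that the paper defers to and omits: feasible $x$ range over precisely the cardinality-constrained allocations, and at the optimum each $\gamma_i$ equals the pointwise minimum of the secants, which by concavity of $\log$ and the endpoint structure of the intervals $[k,k+1]$ equals $\log u_i$ at every integer utility. The two details you make explicit --- that the odd index set $k\in\{1,3,\dots,999\}$ still pins down $\log$ at every integer because each integer in $\{1,\dots,1000\}$ is an endpoint of exactly one secant, and that optimality is asserted with respect to the scaled integer utilities $v_{i,j}$ (so exactness holds only up to the rounding in that scaling) --- are both correct and are glossed over in the paper's statement.
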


We note that the proof of correctness of Proposition~\ref{prop:MILP} is similar to that of the unconstrained version as discussed in \cite{NSW} and is therefore omitted. 
Unfortunately, the MILP described above does not scale to large datasets. This motivates us to develop greedy heuristics for maximizing Nash welfare.

\subsection{Iterative Greedy Heuristics}
\label{subsec:Greedy}
Here we propose two heuristics based approaches to optimize Nash under two-sided cardinality constraints. 
The first approach is based on pure greedy allocation, where we iteratively allocate each product to a re-seller which maximizes the Nash. The second approach is round wise allocation, where the poorest re-seller is preferred to pick up its most preferred product in each round.

\subsubsection{\greedyNashalgo}
\begin{algorithm}[t!]
    \caption{\greedyNashalgo : Greedily Maximize Nash under Two-Sided Cardinality Constraints}\label{algo:algoIGNSW2}
    {\scriptsize
    \begin{flushleft}
        \textbf{Input:} $m$ re-sellers, 
        $n$ products, 
        utilities $W_{ij},~i\leq~m,~j\leq~n$, and 
        parameters $L_1$ and $L_2$ (The lower and upper bound on recommendation list for re-sellers), $R_1$ and $R_2$ (Minimum and maximum number of re-sellers allocated to each product respectively).\\
        \textbf{Output:} Assignments of re-sellers $A = \{A_1,A_2,....,A_m\}$.\\
    \end{flushleft}
    \begin{algorithmic}[1]
    \Statex --------- INITIALIZATION-------------------
    \For {$i = 1 \:to\: m$}: 
    \State set $A_i = \{\}$, (Empty allocations for re-seller $u_i$)
    \EndFor
    \Statex
    \Statex --------- FIRST ALLOCATION-------------------
    \For {$i = 1 \:to\: m$}:
    \State Assign re-seller $u_i$'s most preferred unique product, $p_j$ to $A_i$ subject to cardinality constraints
    \EndFor
    \Statex
    \Statex --------- GREEDY ALLOCATION-------------------
    \For {$j = 1 \:to\: n$}: 
        \Repeat
            \State Find re-seller $u_i$ such that, $p_j \notin A_{i}$, and allocation of product $p_j$ to re-seller $u_i$ leads to 
            \Statex \hspace{0.7cm}maximum increase in Nash subject to cardinality constraints(with lower bounds). Assign product $p_j$ to $A_i$
        \Until {product $p_j$ is allocated to $R_1$ users \textbf{OR} no such suitable user found}
    \EndFor
    \Statex
    \Statex  --------- FULL-FILL RECOMMENDATION LIST  ---------
    \State Some users may not be allocated to $L_1$ products yet. Allocate their most preferred product until their allocation list contains $L_1$ unique products subject to the most preferred product should not be already allocated to $R_2$ users.
    \Statex
    \Statex --------- GREEDY REPLACEMENT ---------
    \For{ each products $p_j$ where number of re-sellers allocated to it is less than $R_1$}:
         \Repeat
            \State Find re-seller $u_i$ such that, $p_l \in A_{i}$ with number of re-sellers allocated to $p_l > R_1$,  
            \Statex \hspace{0.7cm}$p_j \notin A_{i}$, and replacement of $p_l$ with $p_j$ result in a minimum decrease in her/his utility
            \State Replace $p_l$ with $p_j$ in $A_i$
        \Until{Product $p_j$ is allocated to $R_1$ users}
    \EndFor
    \Statex
    \Statex --------- ASSIGN GREEDILY TILL UPPER-BOUNDS SATISFIED ---------
    \For {$j = 1 \:to\: n$}: 
        \Repeat
            \State Find re-seller $u_i$ such that, $p_j \notin A_{i}$, and allocation of product $p_j$ to re-seller $u_i$ leads to 
            \Statex \hspace{0.7cm}maximum increase in Nash subject to cardinality constraints(with upper bounds). Assign product $p_j$ to $A_i$
        \Until {product $p_j$ is allocated to $R_2$ users \textbf{OR} no such suitable user found}
    \EndFor
    \end{algorithmic}}
\end{algorithm}

We now describe an iterative greedy approach that we call \greedyNashalgo (see Algorithm~\ref{algo:algoIGNSW2}). The underlying idea is to allocate each product to myopically achieve the maximum increase in Nash social welfare. 
Specifically, starting from an empty allocation (lines 1-2), we first allocate each re-seller exactly one product by allowing the re-sellers to pick their most preferred product in a round-robin manner subject to maintaining cardinality constraints (lines 3-4). This step ensures that each re-seller acquires a non-zero utility (thus, in turn, making the Nash product non-zero) and makes the remaining decisions of the algorithm well-defined. 

Next, we iterate over the set of products, and, at each step, assign each product in a way that maximizes the marginal increase in Nash Welfare subject to the cardinality constraints (lines 5-8). 
This process continues until either all $R_1$ copies of a product have been allocated or no suitable user is found (line 8). 
If the cardinality constraint for one or more users is not met, we continue allocating them their most-preferred products 
(line 9). Similarly, products not achieving their minimum cardinality are greedily assigned to re-sellers 
(lines 10-14). Once all products achieve their minimum cardinality, we can greedily allocate $R_2$ copies of products subject to user-side cardinality constraints (lines 15-18). 
\looseness=-1

\begin{algorithm}[t]
    \caption{\sealalgo : Sequential Egalitarian Algorithm under Two-Sided Cardinality Constraints}\label{algo:algoSEAL}
    {\scriptsize
    \begin{flushleft}
        \textbf{Input:} $m$ re-sellers, 
        $n$ products, 
        utilities $W_{ij},~i\leq~m,~j\leq~n$, and 
        parameters $L_1$ and $L_2$ (The lower and upper bound on recommendation list for re-sellers), $R_1$ and $R_2$ (Minimum and maximum number of re-sellers allocated to each product, respectively).\\
        \textbf{Output:} Assignments of re-sellers $A = \{A_1,A_2,....,A_m\}$.\\
    \end{flushleft}
    \begin{algorithmic}[1]
    \Statex --------- INITIALIZATION-------------------
    \State $PQ = []$ (Empty ascending order priority queue)
    \For {$i = 1 \:to\: m$}: 
    \State set $A_i = \{\}$, (Empty allocations for re-seller $u_i$)
    \State Insert pair $<u_i,0>$ i.e. $<$user $u_i$, $u_i$'s utility$>$ to $PQ$
    \EndFor
    \Statex
    \Statex --------- SEQUENTIAL ITERATIVE GREEDY ALLOCATION WITH LOWER BOUND--------- 
    \For {$k = 1 \:to\: L_1$}: \label{state:roundStart222}
     \While {$PQ$ is empty}
     \State Take out a reseller $u_i$ with minimum utility from $PQ$\label{state:roundAStart222}
     \State Assign re-seller $u_i$ to his/her most preferred product, $p_j$ such that $p_j \notin A_{i}$ and the
     \Statex \hspace{0.7cm} number of re-sellers allocated to $p_j$ is less than $R_1$
     \State If no such $p_j$ is found then assign re-seller $u_i$ to his/her most preferred product, $p_j$ 
     \Statex \hspace{0.7cm} such that $p_j \notin A_{i}$ and the number of re-sellers allocated to $p_j$ is less than $R_2$
     \EndWhile \label{state:roundAEnd222}
     \For {$i=1\:to\: m$}:
     \State Insert pair $<u_i$,$u_i$'s utility$>$ to $PQ$
     \EndFor
    \EndFor \label{state:roundEnd222}
    \Statex
    \Statex --------- GREEDY REPLACEMENT ---------
    \For{ each products $p_j$ where number of re-sellers allocated to it is less than $R_1$}:
         \Repeat
            \State Find re-seller $u_i$ such that, $p_l \in A_{i}$ with number of re-sellers allocated to $p_l > R_1$,  
            \Statex \hspace{0.7cm}$p_j \notin A_{i}$, and replacement of $p_l$ with $p_j$ result in a minimum decrease in her/his utility
            \State Replace $p_l$ with $p_j$ in $A_i$
        \Until{Product $p_j$ is allocated to $R_1$ users}
    \EndFor

    \Statex
    \Statex --------- SEQUENTIAL ITERATIVE GREEDY ALLOCATION WITH UPPER BOUND--------- 
    \For {$k = L_1 \:to\: L_2$}: \label{state:roundStart222}
     \While {$PQ$ is empty}
     \State Take out a reseller $u_i$ with minimum utility from $PQ$\label{state:roundAStart222}
     \State Assign re-seller $u_i$ to his/her most preferred product, $p_j$ such that $p_j \notin A_{i}$ and the
     \Statex \hspace{0.7cm} number of re-sellers allocated to $p_j$ is less than $R_2$
     \EndWhile \label{state:roundAEnd222}
     \For {$i=1\:to\: m$}:
     \State Insert pair $<u_i$,$u_i$'s utility$>$ to $PQ$
     \EndFor
    \EndFor \label{state:roundEnd222}
    \end{algorithmic}}
\end{algorithm}
\subsubsection{\sealalgo}
It is a sequential egalitarian algorithm (Algorithm~\ref{algo:algoSEAL}). 
Initially, allocations are empty; hence the utility of all re-sellers is $0$. We perform allocations in $L_1$ round. Each round sequentially allocates the poorest re-seller (with the least utility) their most preferred product subject to lower bound cardinality constraints. 
At the end of $L_1$ rounds, if some products are not allocated to $R_1$ re-sellers, perform greedy product replacement as mentioned in steps 12-16. Once every product achieves their minimum cardinality, we can allocate $R_2$ copies of products subject to user-side cardinality constraints (steps 17-22). 
\looseness=-1
\par
Between \greedyNashalgo and \sealalgo, the outperformance 
depends on the utility values of the re-sellers (Appendix~\ref{sec:Examples} provides some relevant Examples). Computational complexity of both algorithms are discussed in Appendix~\ref{sec:complexity}.
Note that in both cases, a feasible solution may not exist under all possible cardinality constraints. However, the constraints can be set to ranges that guarantee feasibility (\S~\ref{sec:experiments}).
\vspace{3mm}
\section{Datasets}
\label{sec:data}
In this section, we introduce the datasets used to benchmark the proposed methodologies.

\vspace{-2mm}
\subsection{Real Social Commerce Dataset}

The dataset from Shopsy comprises of one month of click and purchases history, followed by the next six months' purchase history. In addition, it includes the product catalog, which details the price, revenue (profit), and category of each product. Table~\ref{tab:realData} presents the statistics of this dataset. It is worth noting that this is possibly the largest real dataset being studied in the fair allocation literature.

Since the dataset contains products of various price-ranges and categories, it is not fair to compare the revenues generated by re-sellers across different categories with vastly different expertise. As an example, re-sellers specializing in expensive electronic items (Ex: TV, laptop, etc.) have significantly different expectations in terms of product allocations and revenue contribution to platform when compared to those who specialize in garments. Hence, we partition re-sellers into homogeneous clusters by performing $k$-means clustering on their expertise vectors. The expertise vectors are learned using collaborative filtering~\cite{cf1}. 
The value of $k$ is set to $53$ since inter-cluster distance computed through elbow-method~\cite{elbow} stabilizes at this value. The problem of \NSWCardConst is solved on each cluster separately, and the results are reported in \S~\ref{sec:experiments}. 
The distributions of the cluster sizes and the average price of items sold by re-sellers in each cluster are shown in Fig.~\ref{fig:clusterdist}. As can be seen, there is a significant disparity in the price of the items sold by each cluster, which necessitates this segregation step of the raw data through clustering.

\begin{table}[t]
\centering
\scalebox{0.9} {
\begin{tabular}{cccc} 
    \toprule
    {\bf \#Re-sellers} & {\bf \#Products} & {\bf \#Purchases}& {\bf \#Product categories}\\  
    $531933$ & $13178$ & $951664$ & $703$\\
 \bottomrule
\end{tabular}}
\caption{Summary of the real social commerce dataset.}
\label{tab:realData}
\vspace{-0.10in}
\end{table}

\begin{figure}[t]
\vspace{-0.30in}
\centering  
\subfigure
{\label{fig:resellerDist} \includegraphics[width=0.23\textwidth]{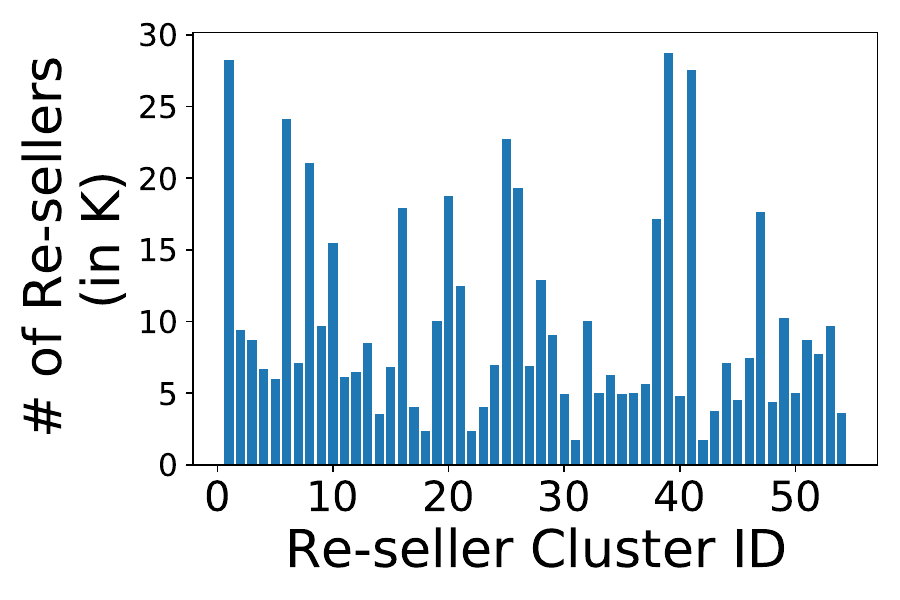}}
\hfill
\subfigure
{\label{fig:priceDist} \includegraphics[width=0.23\textwidth]{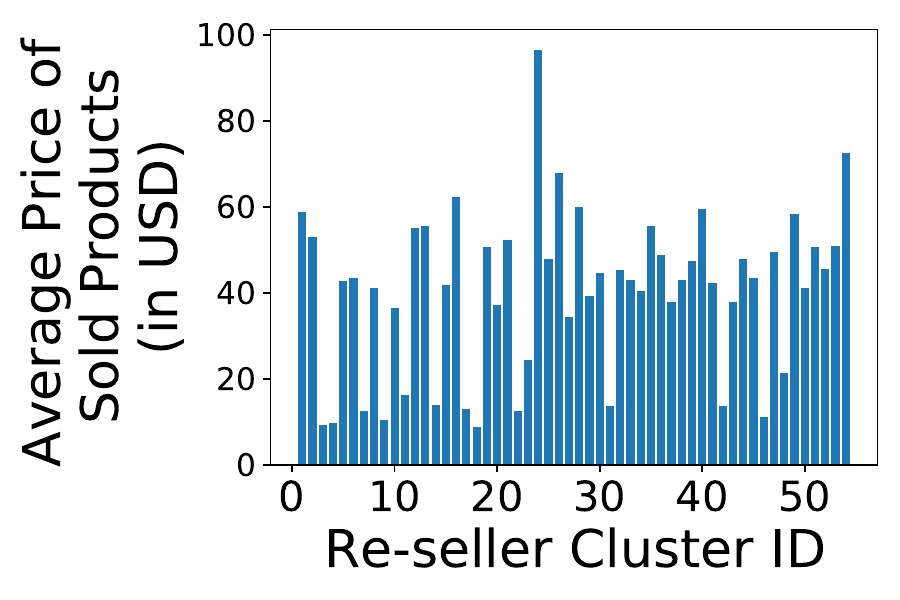}}
\caption{Distributions of the re-sellers (left) and the average price of products sold by them in each cluster (right).
}
 \label{fig:clusterdist}
 \Description[Cluster distributions]{The distributions of the re-sellers (left) and the average price of products sold by re-sellers in each cluster (right).}
\end{figure} 

\subsection{Synthetic Dataset}
Since MILP (Recall \S~\ref{sec:ILP_Nash}) does not scale to large datasets, we create small, synthetic datasets so that the proposed heuristic can be compared to the optimal allocation.
More details about the synthetic dataset generation is provided in Appendix ~\ref{sec:SyntheticDataGeneration}.
\section{Experimental results}
\label{sec:experiments}
In this section, we benchmark the proposed methodologies on the datasets described in \S~\ref{sec:data} and establish that:
\begin{itemize}
\item {\bf Approximation quality:} \sealalgo achieves near optimal performance with minimal dip in revenue.
\item {\bf Efficacy: }\sealalgo provides significantly higher fairness while also ensuring minimal reduction in the revenue when compared to purely maximizing revenue.
\end{itemize}

\noindent
\DeclarePairedDelimiter\ceil{\lceil}{\rceil}
\DeclarePairedDelimiter\floor{\lfloor}{\rfloor}
The implementations of all algorithms used in this study are in Python 3.0. Our experiments are performed on a machine with Intel(R) Xeon(R) CPU @ 2.10GHz with 252GB RAM on Ubuntu 18.04.3 LTS. Our codebase is available at \url{https://github.com/idea-iitd/FairAllocInSocialCom.git}

\begin{figure*}[t]
\centering  
\subfigure[{Re-seller's Fairness}]{\includegraphics[width=1.7in]{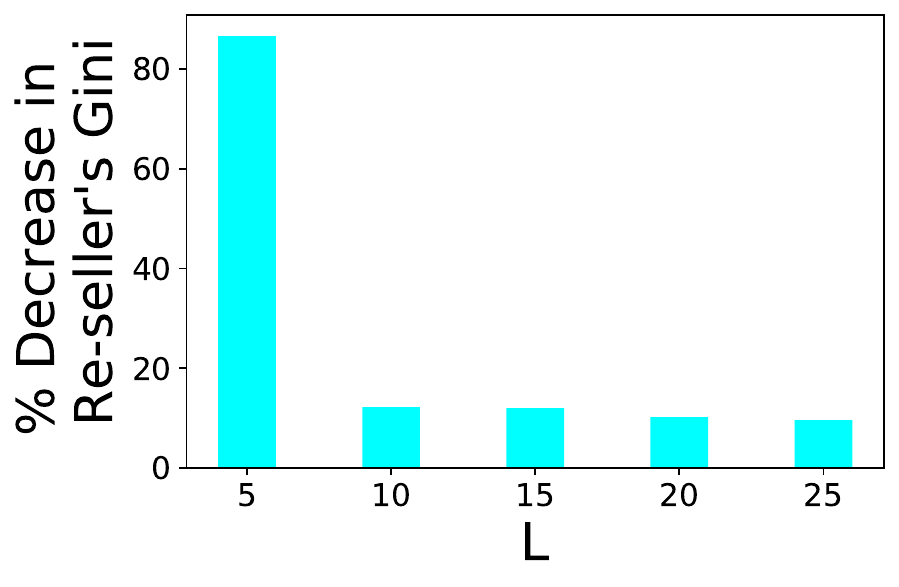}}\hfill
\subfigure[{Producer's Fairness}]{\includegraphics[width=1.7in]{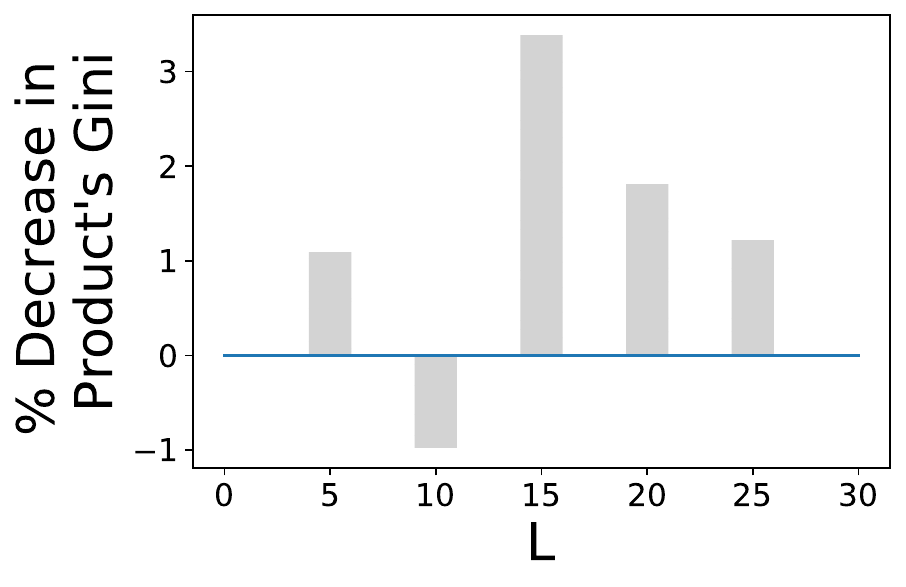}}\hfill
\subfigure[{ Revenue }]{\includegraphics[width=1.7in]{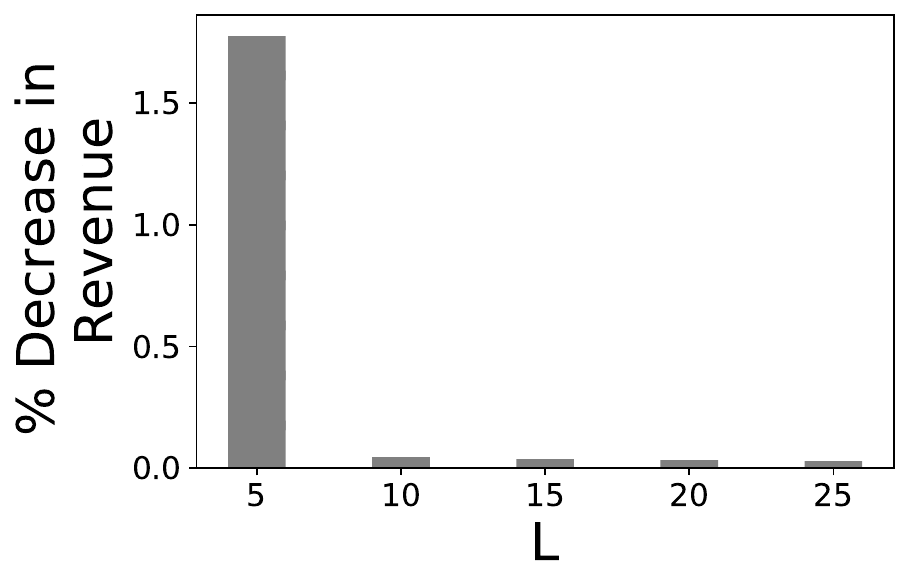}}\hfill
\subfigure[{Re-seller's Fairness}]{\includegraphics[width=1.7in]{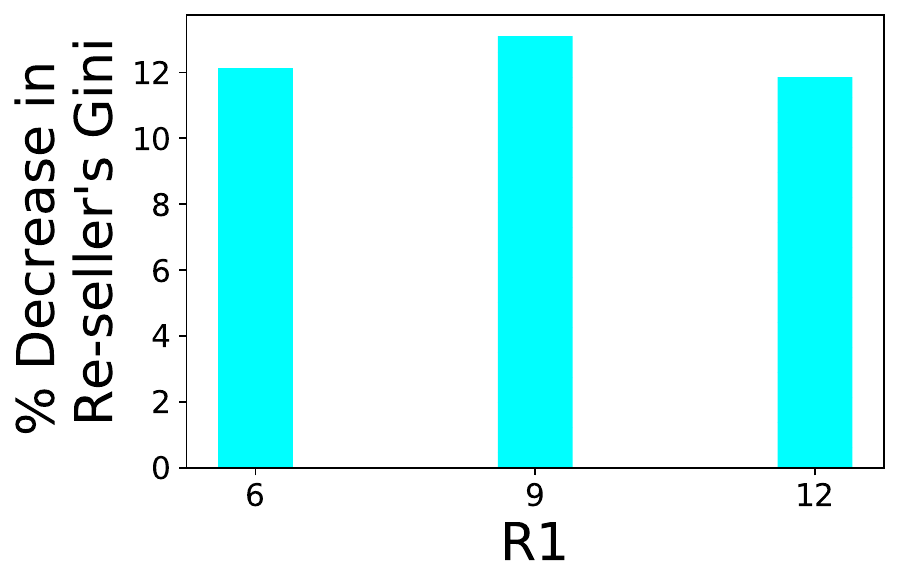}}\\
\subfigure[{Producer's Fairness}]{\includegraphics[width=1.7in]{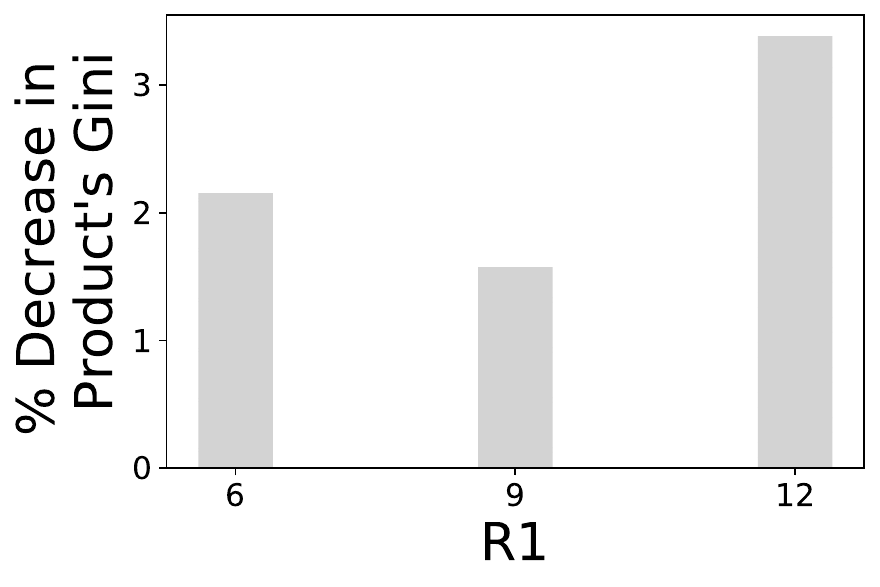}}\hfill
\subfigure[{ Revenue }]{\includegraphics[width=1.7in]{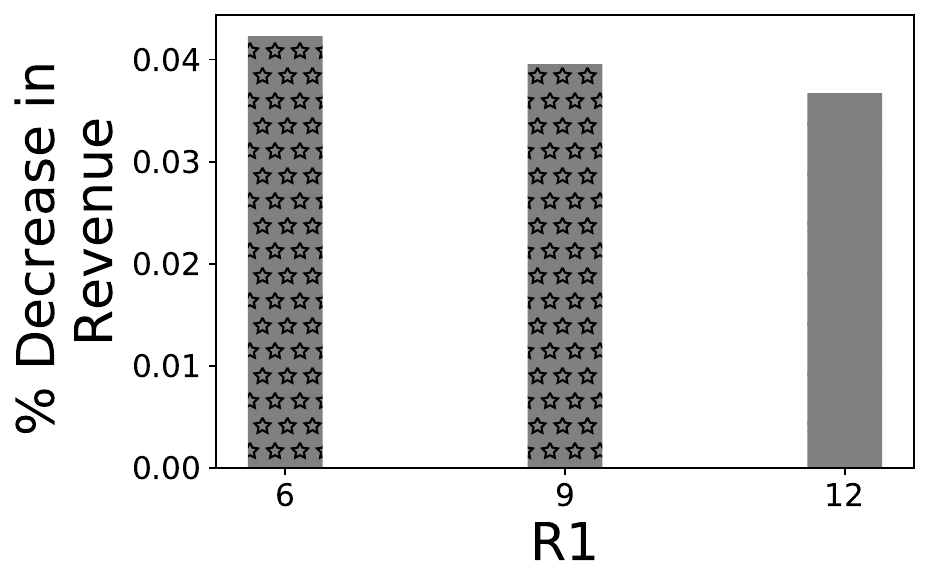}}
\subfigure[]
{\label{fig:resellertime} \includegraphics[width=1.7in]{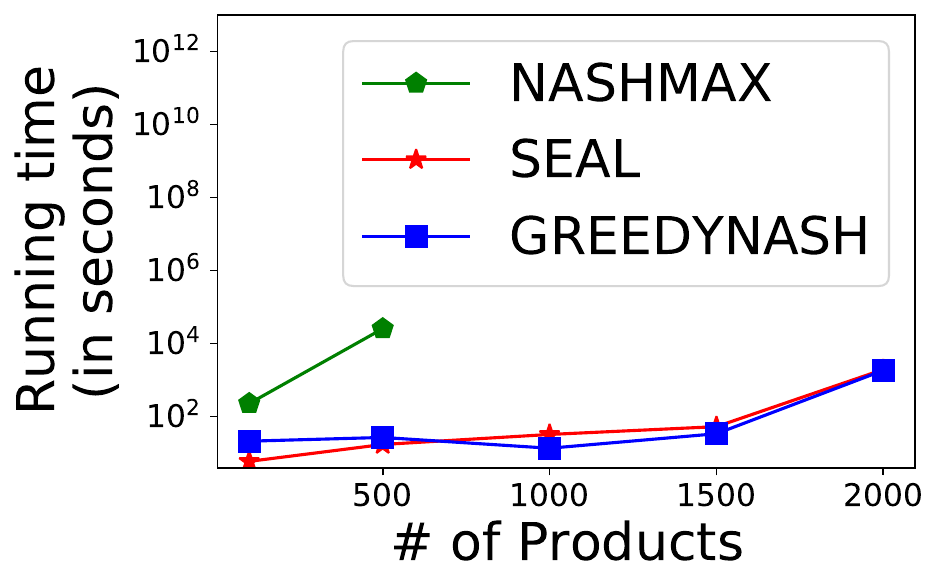}}
\hfill
\subfigure[]
{\label{fig:producttime} \includegraphics[width=1.7in]{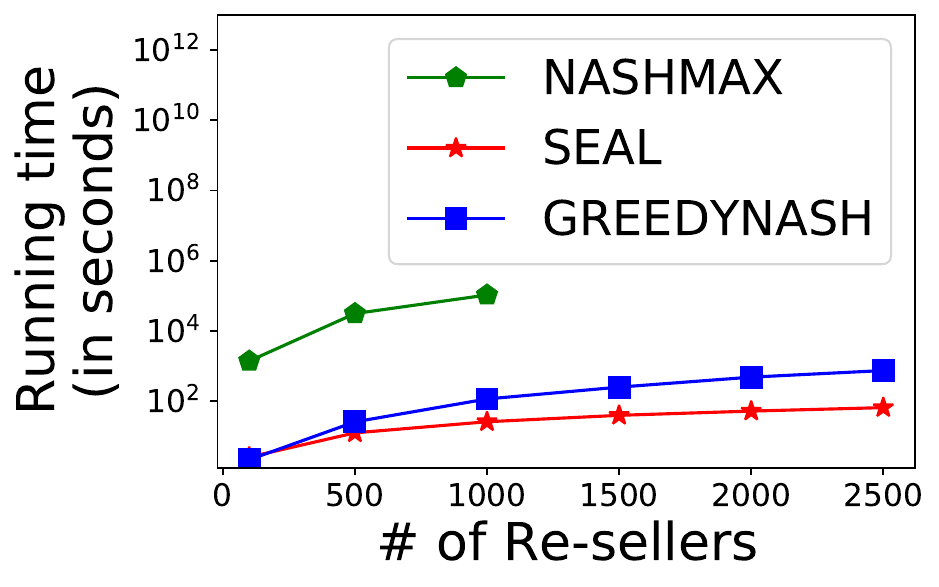}}
\caption{The bar plot shows (a) an increase in fairness among re-sellers, (b) a decrease/increase in fairness among products, and (c) a decrease in revenue on comparing \textsc{NashFair} to \textsc{RevMax} w.r.t. variation in \boldmath{$L$}.  Recall that $L_1$ and $L_2$ are set as $L-\epsilon$ and $L+\epsilon$ respectively where $\epsilon=3$. (d-f) represent the same metrics but, w.r.t. \boldmath{$R1$}. (g-h) Scalability: Running time (in seconds) comparison of \sealalgo, \greedyNashalgo, and \textsc{NashMax} on synthetic dataset w.r.t. increase in the number of (g) products and (h) re-sellers. }
\label{fig:sytheticDataResult}
\Description[Empirical Evaluation]{The bar plot shows (a) an increase in fairness among re-sellers, (b) a decrease/increase in fairness among products, and (c) a decrease in revenue on comparing \textsc{NashFair} to \textsc{RevMax} w.r.t. variation in \boldmath{$L$}.  Recall that $L_1$ and $L_2$ are set as $L-\epsilon$ and $L+\epsilon$ respectively where $\epsilon=3$. (d-f) represent the same metrics but, w.r.t. \boldmath{$R1$}. (g-h) Scalability: Running time (in seconds) comparison of \sealalgo, \greedyNashalgo, and \textsc{NashMax} on synthetic dataset w.r.t. increase in the number of (g) products and (h) re-sellers.}
\end{figure*}
\vspace{-0.05in}
\subsection{Baselines}
To contextualize the performance of the proposed algorithms, we compare its performance to the following baselines:
\begin{itemize}
    
    \item \textsc{RevMax}: We use Integer Linear Programming (ILP) to maximize the overall revenue (Eq.~\ref{eq:rev}) by enforcing both \userConst and \productConst.

    \item \greedyRevalgo: We use the iterative greedy approach to maximize revenue in the presence of \userConst (Eq.\ref{eq:user_constrain}) and \productConst (Eq.\ref{eq:product_constrain}). Specifically, it proceeds in an iterative manner and in each iteration we select the $\langle re-seller, product\rangle$ tuple that maximizes revenue without violating either \userConst or \productConst. While \textsc{RevMax} is guaranteed to provide a superior revenue, \greedyRevalgo is significantly faster and hence scales to large datasets.
    \looseness=-1
    
    
    \item \fralgo~\cite{fairRec} and \textsc{TFrom}~\cite{TFROM} are two-sided fairness-aware recommendation algorithms. \fralgo ensures EF1 allocations for customers, and aims to achieve product side constraints. \textsc{TFrom} considers the position bias in recommendations.
    
    \item \textsc{FOEIR}~\cite{FOEIR} is fair-ranking algorithm that deal with advantaged and disadvantaged social groups' bias and positional bias. In our adaptation, we randomly divided products into these two groups, assumed a uniform attention model and applied the \FOEIR algorithm maintaining the cardinality constraints.
    
\end{itemize}
In addition to the baselines, this work contributes three algorithms, namely \textsc{NashMax}, which optimizes Nash through MILP (\S~\ref{sec:ILP_Nash}), \textsc{GreedyNash} (Alg.~\ref{algo:algoIGNSW2}) and \textsc{SeAl} (Alg.~\ref{algo:algoSEAL}). Also note that we do not consider revenue maximization without constraints, since such allocation results in the highest priced product being allocated to all re-sellers, which is impractical.
\subsection{Parameters}
\textbf{Real dataset: }We set $L_1=L-\epsilon$ and $L_2=L+\epsilon$ for $\epsilon= 0$ and $L =15$ in \userConst, and $R_1=\floor{\frac{L_1+L_2}{2} \times \frac{m}{n}}$, $R_2 = m$ in \productConst. Here, $n$ and $m$ represent the number of products and re-sellers respectively.
\textbf{Synthetic dataset: }
 We vary $L = [5, 10, 15, 20, 25]$, where $L_1=L-\epsilon$ and $L_2=L+\epsilon$ for $\epsilon=3$. We set the minimum guaranteed copy for each product to $R_1$, i.e., $R_1 = \floor{\alpha \times L_1 \times m /n}$ for $\alpha = [0.5,0.75, 1]$ and $R_2=[m, 2 \times R_1]$. 

\subsection{Metrics}
We quantify performance using the following metrics.
\begin{itemize}
\item \textbf{Revenue:} The total expected revenue from all products (Eq.~\ref{eq:rev}). 
\item \textbf{Gini coefficient:} The Gini coefficient is the ratio of the area that lies between the line of equality and the Lorenz curve over the total area under the line of equality ~\cite{gini_coef}.
 Mathematically,
 \begin{equation}
 \label{eq:gini}
Gini=\frac{\sum_{i=1}^{m} \sum_{j=1}^{m}\left|U_{i}-U_{j}\right|}{2 m \sum_{j=1}^{m} U_{j}}
\end{equation}
where $U_i$ is the expected revenue (utility) of re-seller $i$ and $n$ is the number of re-sellers for computing fairness for re-sellers. For products, $i$ is product and $n$ is number of products. A lower Gini indicates fairer distribution.

\item \textbf{Average income gap:} The income gap is the difference between the maximum expected revenue with the minimum one across all re-sellers, i.e., 
\begin{equation}
\text{Income Gap}=\max_{\forall u_i,u_k\in U}\{\lvert rev(u_i)-rev(u_k)\rvert\}, 
\end{equation}
where $rev(u_i)=\sum_{p_j\in A_i} W_{i,j}$. 

\item \textbf{Percentage of constraint-satisfying allocations:} If we are given $X$ datasets/clusters of re-sellers and the corresponding products to be allocated, we find in what percentage of the datasets/clusters, an algorithm is able to provide an allocation satisfying \userConst and \productConst.
\end{itemize}

\subsection{Comparison of the Optimal Approaches}
The MILP formulations, i.e., \textsc{RevMax}, \textsc{ConRevMax} and \textsc{FairNash} provide optimal allocations under integer valuations. However, since MILP-based algorithms are prohibitively slow, we measure their efficacy only on the synthetic datasets.

\subsubsection{Efficacy}
Fig. ~\ref{fig:sytheticDataResult} presents the results. We observe that enforcing Nash has minimal impact on the revenue. Furthermore the fairness on re-seller side in terms of Gini is significantly better. The Gini on the product sides are similar for \textsc{RevMax} and \textsc{NashMax} since both ensure \productConst. 

\subsubsection{Scalability}
As noted earlier MILP-based algorithms are prohibitively slow, making them impractical for real datasets. We substantiate this claim by analyzing the growth of their running times against the number of re-sellers and products in the system. Figs.~\ref{fig:resellertime}-\ref{fig:producttime} present the results, where we compare \textsc{NashMax} with the greedy heuristics of \textsc{GreedyNash} and \textsc{SeAl}. As visible, the greedy heuristics are more than $2$ orders of magnitude faster.

\subsection{Efficacy of Greedy Heuristics}
The fast running times of greedy heuristics make them an attractive proposition. We therefore investigate their efficacy on the synthetic dataset against \textsc{NashMax}, which is optimal in terms of Nash.
\subsubsection{Performance on synthetic datasets}
Table~\ref{tab:CompSynthetic} presents the results. We observe that the approximation ratios of \textsc{GreedyNash} and \textsc{SeAl} are $\approx 0.99$ indicating excellent efficacy; \textsc{SeAl} performs marginally better. We also report the minimum approximation ratios across all instances of synthetic datasets and notice that it is close to the average indicating excellent stability. The dip in revenue is less than $2\%$, with \textsc{SeAl} being the better performer. Finally, the average of the income gaps is also close to the optimal. Overall, these results establish greedy heuristics as a good alternative due to obtaining excellent balance between efficiency and efficacy.

\begin{table}[t]
\centering
\scalebox{0.9} {
\begin{tabular}{lccc} 
    \hline
    {\bf } & {\bf \sealalgo} & {\bf \greedyNashalgo}\\ \hline 
    {\bf Avg. approx. ratio} & $0.9989$ & $0.997$\\ 
    {\bf Min. approx. ratio} & $0.9952$ & $0.987$\\
    {\bf \% Dip in revenue} & $0.47$ & $1.61$\\
    {\bf Avg. income gap ratio} & $0.97$ & $1.02$\\
 \hline
\end{tabular}}
\caption{Comparison of \sealalgo and \greedyNashalgo on synthetic datasets. We present the ratios when compared to the same metric obtained by \textsc{NashMax}.}
\label{tab:CompSynthetic}
\end{table}
\subsubsection{Performance on real datasets:} We next measure performance on the real datasets. Although no technique consistently achieves the best performance across all metrics, \textsc{SeAl} and \textsc{GreedyNash} achieve the best overall balance in performance. 
We notice that both \fralgo and \TFUalgo violate cardinality constraints on the product side in at least $96\%$ of the instances. Although these baseline algorithms try to adhere to two-sided constraints, they do not guarantee adherence, which we do. This property manifests itself in the trends visible in Table~\ref{tab:CompReal}. Finally, although \TFUalgo achieves the best Gini on the re-seller side, its revenue is significantly worse, making it impractical for social commerce platforms. 
\looseness=-1

\section{Related Work}
\noindent {\bf Social recommendation:}
 In a social network, users spread their preferences for items to their connections by sharing the items, and this potentially converts to the sales of items among connections. In the social commerce domain, the problem of Influence Maximization (IM) is well-studied~\cite{8_inproceedings, 9_inproceedings, 12_article, 14_article, 15_inproceedings,18_article, 23_10.1145/2723372.2723734, 24_tang2014influence}. The focal point of IM is, however, different from ours. In IM, given a social network and a budget $b$, the goal is to select $n$ nodes from the network, such that if they advertise your product, then the cascading effect of this action is maximized through network diffusion. In our problem, the set of influencers i.e., re-sellers, are given to us as input. Thus, our focus is on which influencer i.e., re-sellers, should promote which product, which is the allocation aspect. In addition, the social network between re-sellers and customers is unavailable in our case. Hence, recommendation techniques based on the rating of goods provided by users and the interaction of those users in a social network~\cite{SoRec, SocRecMatrixFac, xu2020global,long2021social,lin2019cross, xu2019relation,gao2020item} 
 is out of our scope.
 \looseness=-1

\noindent {\bf Allocation problems: }
Fair allocation has been widely studied in computational social choice theory, where the promary interest is in 
the allocation of divisible goods (aka the cake-cutting problem). Here, however, we are in the realm of indivisible goods since our allocation is binary – a product can either be allocated or not. Very recently, some papers focused on allocating indivisible goods in budgeted course allocation~\cite{6_Budish2010TheCA}, balanced graph partition~\cite{5_bouveret2017fair}, or allocation of cardinality constrained group of resources \cite{4_barman2020fair}. Few other sets of papers propose envy freeness up to one good (EF1) and Pareto optimal product allocations \cite{fairRec, fairGroupRec, aziz2021fair,ExAnteExPost}. 
Going beyond, we explore various well-studied notions of fairness in the allocation of goods ---including Nash social welfare, 
EF1, and equitability up to one item (EQ1)---in the context of two-sided cardinality constraints.\\
\noindent {\bf Fair Recommendation: }
~\citet{multiFR} proposes a multi-objective gradient descent algorithm that optimizes fairness and utility to balance the objectives between consumers and producers. Few recent works propose a framework to learn the relevance scores considering the joint multi-objective optimization \cite{ge2022toward,wu2022joint}. 
 However, in this work, we only focus on the allocation aspect. To the best of our knowledge, this is the first work on the allocation problem in the context of social commerce, which does not have a social graph among re-sellers and customers.
 \looseness=-1
\begin{table}[t]
\centering
\scalebox{0.7} {
\begin{tabular}{p{1.0in}cccccccc} 
    \toprule
    {\bf Metric } & {\bf \sealalgo} & {\bf GREEDY-} & {\bf \fralgo} & {\bf GREEDY- } & {\bf \TFUalgo} & {\bf \FOEIR}\\ 
    { } & {} & {\bf NASH} & {} & {\bf REVENUE} & {} & {}\\ \midrule
    {\bf Revenue (in USD)} & $381k$ & $372k$ & $349k$ & $\textbf{390k}$ & $167k$ & $338.9k$\\
    {\bf Avg. gini (users)} & $0.314$ & $0.273$ & $0.43$ & $0.553$ & $\textbf{0.23}$ & $0.43$\\
    {\bf Avg. income gap} & $672.11$ & $\textbf{408}$ & $1123.257$ & $1525$ & $438$ & $1338.01$ \\
    {\bf \% of Cardinality-violations} & $\textbf{0}$ & $\textbf{0}$ & $96.22$ & $\textbf{0}$ & $100$ & $0$\\
 \bottomrule
\end{tabular}}
\caption{Comparison of \sealalgo and \greedyNashalgo with baselines on real social commerce dataset. The best result for each metric (row) is highlighted in bold.}
\label{tab:CompReal}
\end{table}

\section{Conclusion}
Social commerce has emerged as a successful model to doing business. Recommendation engines form a core component of the social commerce platform and dictates the revenue earned by both producers and re-sellers. While several works have studied fairness issues on e-commerce recommendation engines, it remains unexplored in the context of social commerce. In this work, we bridged this gap. Our analysis revealed that ensuring fairness on social commerce maps to a special case of fair division of individual goods under two-sided cardinality constraints. We studied multiple fairness notions and established that many of the existential and computational guarantees of unconstrained setting do not extend to two-sided cardinality settings. Based on our discoveries, we proposed Nash social welfare the optimization function of choice. We showed that optimizing Nash under two-sided constraints is NP-hard, and hence designed polynomial-time greedy heuristics to overcome the bottleneck. Extensive experiments on real datasets obtained from a leading social commerce firm showed that the proposed heuristics provide near-optimal efficacy in a fraction of time. 
In addition, they out-performed established baseline algorithms showcasing the need to design a specialized algorithm that is cognizant of the social commerce context.
\\
\noindent {\bf Acknowledgements:} A. Gupta gratefully acknowledges the support of Google Ph.D. Fellowship. R. Vaish acknowledges support from DST INSPIRE grant no. 
DST/INSPIRE/04/2020/000107 and SERB grant no. CRG/2022/002621.
\clearpage

\bibliographystyle{ACM-Reference-Format}
\bibliography{main}
\clearpage

\appendix
\section*{Appendix}
\label{sec:paper_appendix}

\renewcommand{\thesubsection}{\Alph{subsection}}
\renewcommand{\thefigure}{\Alph{figure}}
\renewcommand{\thetable}{\Alph{table}}
\renewcommand{\thealgorithm}{\Alph{algorithm}}
\subsection{Non-Existence Proofs}
\label{sec:Non_Existence_Proof}
\theoremFirst*
\begin{proof}
Suppose there are 5 re-sellers and 5 products. We want to allocate 3 products to each re-seller and provide at least 3 copies of recommendations for each product, i.e., $L_1=L_2=R_1=R2=3$. The utility values ($W_{i,j}$) are given in Table \ref{tab:EQ1Fail}. In such a setting, these conditions must satisfy: \\(1) Out of first 4 rows, at least 1 row will have valuation $\alpha\times3/4$.\\(2) If we remove one item from allocation of $u_5$, its utility will be $4$. Now, $EQ1$ fails if $4 > \alpha\times3/4$ $\Rightarrow$ $\alpha < 16/3$.
\end{proof}
\begin{table}[!htb]
    \begin{minipage}{.50\linewidth}
      \centering
      \scalebox{0.9}{
    \begin{tabular}{llllll}
    \toprule
       & $p_1$   & $p_2$   & $p_3$   & $p_4$   & $p_5$     \\
       \midrule
    $u_1$ & $\alpha$/4 & $\alpha$/4 & $\alpha$/4 & $\alpha$/4 & 10-$\alpha$  \\
    $u_2$ & $\alpha$/4 & $\alpha$/4 & $\alpha$/4 & $\alpha$/4 & 10-$\alpha$  \\
    $u_3$ & $\alpha$/4 & $\alpha$/4 & $\alpha$/4 & $\alpha$/4 & 10-$\alpha$  \\
    $u_4$ & $\alpha$/4 & $\alpha$/4 & $\alpha$/4 & $\alpha$/4 & 10-$\alpha$  \\
    $u_5$ & 2    & 2    & 2    & 2    & 2     \\
    \bottomrule
    \end{tabular}}
    \caption{}
    \label{tab:EQ1Fail}
    \end{minipage}%
    \begin{minipage}{.50\linewidth}
      \centering
    \scalebox{0.9}{
    \begin{tabular}{llllll}
    \toprule
       & $p_1$   & $p_2$   & $p_3$   & $p_4$   & $p_5$     \\
       \midrule
    $u_1$ & 3 & 3 & 2 & 1 & 1  \\
    $u_2$ & 3 & 3 & 2 & 1 & 1  \\
    $u_3$ & 3 & 3 & 2 & 1 & 1  \\
    $u_4$ & 3 & 3 & 2 & 1 & 1  \\
    $u_5$ & 1 & 1 & 2 & 3 & 3    \\ 
    \bottomrule
    \end{tabular}}
    \caption{}
    \label{tab:EF1DependUsersPerm}
    \end{minipage} 
\end{table}

\propositionThree*
\begin{proof}
Suppose there are 5 re-sellers and 5 products. We want to allocate 3 products to each re-seller and provide at least 3 copies of recommendation to each product. The utility values ($W_{i,j}$) are given in Table \ref{tab:EF1DependUsersPerm}. In such a setting, we can not find a feasible $EF1$ solution if we take any permutation that does not start with $u_5$.
\end{proof}

\theoremThree*
\begin{proof}
Consider an instance with two re-sellers $u_1$ and $u_2$ and four products $p_1,\: p_2,\: p_3$ and $p_4$. Suppose each re-seller can receive exactly two products, and suppose each product can be assigned to exactly one re-seller, i,e, $L_1=L_2=2$ in \userConst and $R_1=R_2=1$ in \productConst.

Let the utilities for re-seller $u_1$ be $1,\:1,\: 2+\epsilon,\: 2+\epsilon$ for $p_1,\:p_2,\:p_3,\:p_4$ respectively, where $\epsilon>0$ is sufficiently small (say less than $0.4$). Similarly, let re-seller $u_2$ value $p_1,\: p_2,\; p_3,\; p_4$ at $\epsilon,\;\epsilon,\; 3,\; 3$, respectively.
A cardinality-constrained Nash optimal allocation gives {$p_1,p_2$} to re-seller $u_1$ and {$p_3,\:p_4$} to re-seller $u_2$. This violates $EF1$ w.r.t. re-seller $u_1$. 
\end{proof}

\subsection{Hardness Proofs}
\label{sec:Hardness_Proof}
\theoremSecond*
\begin{proof}
We will show a reduction from the \textit{3-Partition} problem, which is known to be NP-complete~\cite{NPcomplteteProof}. 

\begin{definition}[3-Partition problem] An instance of the 3-Partition problem involves a set of 3$m$ numbers $b_1,\cdots,b_{3m}$ that sum up to $mT$. The goal is to partition the numbers into $m$ triplets with equal sums. The problem remains NP-complete even if all $b_i$'s are close to $T/3$.
\looseness=-1
\end{definition}

\noindent
\textbf{Fair division instance to find \boldmath$EQ1$ under cardinality constraints:} Construct a fair division instance with a set of $m+1$ re-sellers and $3m+3$ items. There are $m$ main re-sellers and one dummy re-seller. The items consist of $3m$ partition items $g_1,\cdots,g_{3m}$ and three dummy items. Each main re-seller values each partition item $g_i$ at $b_i$, and each dummy item at 0. The dummy re-seller values every item at $T/2$. The cardinality constraints require that each re-seller receives exactly three items, and one copy of the items is allocated.\\
\textbf{In the forward direction}, given a solution to 3-Partition, one can construct a feasible $EQ1$ allocation by assigning the partition items among the main re-sellers in accordance with the solution to 3-Partition, and give the three dummy items to the dummy re-seller.\\
\textbf{In the reverse direction}, suppose $A$ is a feasible $EQ1$ allocation. Then, by the cardinality constraint, the dummy re-seller must receive three items under $A$, each of which it values at $T/2$. Therefore, to satisfy $EQ1$, every main re-seller must have a value of at least $T$. This means that each main re-seller must get three partition items (recall that each $b_i$ is close to $T/3$, so any two $b_i$'s together cannot take us to $T$). This, in turn, implies that all dummy items must go to the dummy re-seller. Therefore, under $A$, we must allocate partition items so that each main re-seller gets exactly three partition items, and its utility is at least $T$. Since the partition items have a total value of $mT$, each main re-seller must have a value of exactly $T$, giving us a solution of 3-Partition. Hence, finding $EQ1$ allocation under cardinality constraints is NP-complete.
\end{proof}

\theoremFour*
\begin{proof} (sketch)
We briefly outline a reduction from a balanced version of the Partition problem~\cite{segal2019fair}. The input consists of a set of $2r$ numbers $a_1,\dots,a_{2r}$ that sum up to $2T$. The goal is to partition these numbers into two sets of cardinality $r$ each such that the numbers in each set sum to $T$.

The reduced fair division instance is constructed as follows: There are $2r$ items/products $p_1,\dots,p_{2r}$. Each item can be allocated to exactly one reseller (i.e., $R1 = R2 = 1$). Each reseller can get exactly $r$ items (i.e., $L1 = L2 = r$). The decision threshold for Nash welfare is $T$.

The implication in forward direction is easy to see. In the reverse direction, we observe that the arithmetic mean of the value is exactly $T$ and the geometric mean is at least $T$. Thus, by the AM-GM inquality, the utilities of resellers must be equal, inducing the desired partition.
%
\looseness=-1
\end{proof}

\subsection{Complexity Analysis}
\label{sec:complexity}
Suppose there are $m$ re-sellers and $n$ products. The minimum and maximum size of allocation list for each re-seller is $L_1$ and $L_2$ respectively. 
$R_1$ and $R_2$ are the minimum and maximum number of assigned re-sellers for each product respectively.
\subsubsection{The complexity of \greedyNashalgo algorithm}
The complexity of algorithm ~\ref{algo:algoIGNSW2} (\greedyNashalgo) is $O(mn+{R_1}^2{L_2}^3)$. The complexity of steps (1-9 and 15-18) is $O(mn)$. The greedy replacement procedure (steps 10-14) is bounded with $O({R_1}^2{L_2}^3)$ for $R_1 = \floor{\alpha \times L_1 \times m /n}$, where $0<\alpha\leq1$. At step 7, a product ($p$) is not allocated to any re-seller, when at most $R_1$ re-sellers are assigned products less than $L_2$. Note that these re-sellers are already in allocation of ($p$), say such re-sellers $U_R$. This case of not finding feasible re-seller can arise for at most $L_2$ products, say those products $P_L$. The re-sellers $U_R$ can be assigned with at most ${R_1}\times L_2$ unique products whose cardinality can exceed $R_1$ (at step 9). So there can be at most ${R_1}\times L_2 \times L_2$ re-sellers which are assigned products whose cardinality exceed $R_1$, say such re-sellers $U_{R{L_2}^2}$. To balance the cardinality of products $P_L$, at most ${R_1}\times L_2$ assignments are done by greedily replacing ${R_1}\times L_2$ products across ${R_1}\times L_2\times L_2$ re-sellers of $U_{R{L_2}^2}$. So complexity of replacement procedure (steps 10-14) is bounded with $O({R_1}^2{L_2}^3)$.\\
\subsubsection{The complexity of \sealalgo algorithm}
The complexity of algorithm ~\ref{algo:algoSEAL} (\sealalgo) is $O(mn\log m+{L_2}^2{R_1}^2)$. The complexity of steps (1-4) is $O(m\log m)$, steps (5-9 and 17-20) is $(mn\log m)$, and steps (10-11 and 21-22) is $O(m\log m)$. The complexity of steps (12-16) is bounded by $O({L_2}^2{R_1}^2)$ for ${R_1} = \floor{\alpha \times L_2 \times m /n}$, where $0<\alpha\leq1$. A re-seller can not find feasible products at step 8, if there are at most $L_2$ products which are allocated to less than $R_1$ re-sellers. Say such products $P_L$. There can be at most $L_2 \times {R_1}$ re-sellers which are allocated to products whose cardinality exceed ${R_1}$, say such re-sellers $U_{LR}$. To balance the cardinality of products $P_L$, at most $R_1 \times L_2$ assignments are done by greedily replacing ${R_1} \times L_2$ products across $R_1\times L_2$ re-sellers of $U_{LR}$. So complexity of steps (12-16) is $O({L_2}^2{R_1}^2)$.
\looseness=-1

\subsection{Comparing \greedyNashalgo and \sealalgo}
\label{sec:Examples}

\begin{table}[!htb]
    \begin{minipage}{.45\linewidth}
      \centering
      \scalebox{0.8} {
        \begin{tabular}{llll}
        \toprule
           & $p_1$   & $p_2$   & $p_3$ \\
           \midrule
        $u_1$ & 7 & 1 & 2   \\
        $u_2$ & 5.5 & 2 & 2.5  \\
        $u_3$ & 5 & 4 & 1   \\
        \bottomrule
        \end{tabular}}
    \caption{\greedyNashalgo\\ is better than \sealalgo}
    \label{tab:GNbetter}
    \end{minipage}%
    \begin{minipage}{.45\linewidth}
      \centering
      \scalebox{0.8} {
        \begin{tabular}{llll}
        \toprule
           & $p_1$   & $p_2$   & $p_3$ \\
           \midrule
        $u_1$ & 7 & 1 & 2   \\
        $u_2$ & 6 & 1.5 & 2.5  \\
        $u_3$ & 5 & 4 & 1   \\
        \bottomrule
        \end{tabular}}
    \caption{\sealalgo is better than \greedyNashalgo}
    \label{tab:SEALbetter}
    \end{minipage} 
\end{table}

While comparing \sealalgo and \greedyNashalgo, either one might output allocations with better Nash Welfare. Lets demontrate it with an example. Suppose there are three re-sellers and three products. We want to allocate 2 products to each re-seller and provide 2 copies of recommendations for each product, i.e., $L_1=L_2=R_1=R_2=2$. The utility values ($W_{i,j}$) are given in Tables \ref{tab:GNbetter} and \ref{tab:SEALbetter}. Suppose we allocate the first product to users in order $u_1$, $u_2$, and $u_3$, and we choose product ordering $p1$, $p_2$, and $p_3$ for allocation. In case of Table \ref{tab:GNbetter} \greedyNashalgo provide allocations ($u_1[p_1,p_3]$, $u_2[p_1,p_2]$,$u_3[p_2,p_3]$) with $Nash=337.5$, while \sealalgo provides allocations ($u_1[p_1,p_2]$, $u_2[p_1,p_3]$,$u_3[p_2,p_3]$) with $Nash=320$. In cases of Table \ref{tab:GNbetter}, \greedyNashalgo provides allocations ($u_1[p_1,p_3]$, $u_2[p_1,p_2]$,$u_3[p_2,p_3]$) with $Nash=337.5$, while \sealalgo provides allocations ($u_1[p_1,p_2]$, $u_2[p_1,p_3]$,$u_3[p_2,p_3]$) with $Nash=340$. 

\subsection{Importance of \userConst (Eq.\ref{eq:user_constrain})}
\label{sec:ImpUserConst}
\userConst constraint is a practical necessity. Social-commerce platforms work within the design constraints of a UI, which typically shows a certain number of products. length of the recommendation list for each re-seller is controlled through re-seller-side constraints. To verify this empirically, we benchmark our algorithms against two adapted baselines without \userConst.
\looseness=-1
\begin{itemize}
\item \textsc{LPT}~\cite{LPT} is an extension of Graham(1966)'s classical "Longest-processing-time-first(LPT)" algorithm for job scheduling. We adapt it to showcase the use of re-seller side cardinality constraints while optimizing Nash. In our adaptation, we iteratively assign products to the least-happy reseller while ensuring that product-side cardinality constraints are satisfied.
\item \UnconsGreedyNash is un-constrained version of \greedyNashalgo where \userConst (Eq.\ref{eq:user_constrain}) is not enforced.
\looseness=-1
\end{itemize}
We set $R_2=R_1+1$ for comparing \LPT with \greedyNashalgo and \sealalgo.
\begin{table}[t]
\centering
\scalebox{0.7} {
\begin{tabular}{cccccccc} 
    {\bf Metric } & {\bf \sealalgo} & {\bf \greedyNashalgo} & {\bf \LPT}& {\bf \UnconsGreedyNash}\\ \midrule
    {\bf Revenue (in USD)} & $\textbf{310.57k}$ & $309.4k$ & $306.35k$ & $308.5k$ \\
    {\bf Avg. gini (users)} & $0.19$ & $0.21$ & $\textbf{0.07}$ & $0.09$ \\
    {\bf Avg. income gap} & $398.33$ & $1080.40$ & $\textbf{342.94}$ & $346.68$  \\
    {\bf 5th $\%$tile of products allocated} & $\textbf{15}$ & $\textbf{15}$ & 1 & $1$\\
    {\bf 95th $\%$tile of products allocated} & $\textbf{15}$ & $\textbf{15}$ & 19 & $26$\\
    {\bf Variance in products allocated} & $\textbf{0}$ & $\textbf{0}$ & $30.95$& $53.81$  \\
    
\end{tabular}}
\caption{Comparison of \sealalgo, \greedyNashalgo, \LPT, and \UnconsGreedyNash on real social commerce dataset to show the need of re-seller side constraints}
\label{tab:CompLPT_A}
\end{table}
\sealalgo and \greedyNashalgo outperforms \LPT and \UnconsGreedyNash (see Table~\ref{tab:CompLPT_A}). Specifically, when \UnconsGreedyNash optimizes Nash, the number of products allocated to a reseller varies widely, violating re-seller side constraints. For example, 15.35$\%$ resellers got less than 5 products, whereas 25.82$\%$ resellers got more than 20 products. A Similar trend is observed for the \LPT too. Since selling success depends on multiple factors such as catalog diversity, finding interested customers, etc., such variability hurts the re-seller experience.
\subsection{Synthetic Dataset Generation}
\label{sec:SyntheticDataGeneration}
To compare the proposed heuristic to the optimal allocation, we generate $100$ synthetic datasets with each containing $100$ users and $100$ products. Each product's (revenue) is set to an integer chosen uniformly at random in the range $[1,1000]$. Note that Nash optimization is free from scaling the utilities of users, and the need for integer valuation for MILP is mentioned in \S~\ref{sec:ILP_Nash}. Each dimension (product) in the expertise vector of an user is set to a value chosen uniformly at random from the range $[0,1]$.
\looseness=-1

\end{document}